\newtheorem{proposition}{Proposition}
\newtheorem{lemma}{Lemma}
\newtheorem{corollary}{Corollary}
\newtheorem{definition}{Definition}
\newtheorem{theorem}{Theorem}
\newtheorem{problem}{Problem}
\newtheorem{question}{Question}
\title{Arrow's single peaked domains, richness, and domains for plurality and the Borda count}
\author[1]{Klas Markstr{\"o}m\thanks{Corresponding author: klas.markstrom@umu.se}}
\author[2]{S\o ren Riis}
\author[2]{Bei Zhou\footnote{Authors are  listed in alphabetical order}}
\affil[1]{Ume\aa\ University} 
\affil[2]{Queen Mary University of London}
\date{}
\begin{document}
\maketitle
\begin{abstract}
In this paper  we extend the study of Arrow's generalisation of Black's single-peaked domain  and connect this to domains where voting rules satisfy different versions of independence of irrelevant alternatives.

First we report on a computational generation of all non-isomorphic Arrow's single-peaked domains on $n\leq 9$ alternatives.  Next, we  introduce a quantitative measure of richness for domains, as the largest number $r$ such that every alternative is given every rank between 1 and $r$ by the orders in the domain.  We investigate the richness of Arrow's single-peaked domains and prove that Black's single-peaked domain has the highest possible richness, but it is not the only domain which attains the maximum.

After this we connect Arrow's single-peaked domains to the discussion by Dasgupta, Maskin and others of domains on which plurality and the Borda count satisfy different versions of Independence of Irrelevant alternatives (IIA). For Nash's version of IIA and plurality, it turns out the domains are exactly the duals of Arrow's single-peaked domains. As a consequence there can be at most two alternatives which are ranked first in any such domain.

For the Borda count  both Arrow's and Nash's versions of  IIA lead to a   maximum domain size which is exponentially  smaller than $2^{n-1}$, the size of Black's single-peaked domain. 
\end{abstract}

\section{Introduction}

Ever since the foundational work by Condorcet \cite{Con85} it has been known that simple majority voting can display various unwanted effects if the voter's opinions are diverse enough. In contrast,  \cite{black} showed that when the voters rank candidates according to their position on a common political axis, a majority decision will always lead to a transitive ranking of the candidates.
The maximum domain of this type is now known as Black's single peaked domain and it is known that up to isomorphism there is a unique such domain, of size $2^{n-1}$, for each number of voters $n$.  In addition to Black's original definition there are now several structural characterisations of this domain, see \cite{PUPPE201855}. Strategy-proof voting on single-peaked domains is well understood \cite{Mou80}  and more generally it is known that many of the hard computational problems of social choice become much simpler on single-peaked domains, see \cite{elkind2022preference}. 

In terms of modelling real-life preferences, for example in general elections, the assumption of a single-peaked domain has been debated. At some periods in time, in certain countries, a large part of the population has based their preferences on a left-right political axis, however even in such circumstances one could expect there to be outliers which make the domain deviate from a single-peaked one. Some authors \cite{list2013deliberation} have investigated how e.g. deliberation may lead to a domain which is closer to being single-peaked, while others \cite{egan2014something} have studied situations which may lead to a domain with two peaks, on a common political axis. A concrete example deviating from single-peakedness given in \cite{dasgupta2008robustness} is the French presidential election of 2002.  Due to these deviations, one could, even in situations which favour single-peakedness, expect to see a domain which has a large intersection with Black's single-peaked domain, rather than an actual subdomain of it.  

One class of such domains was introduced by Arrow, who in  \cite{arrow63} introduced  a weaker domain condition than Black's, requiring only that the restriction of the domain to any triple of alternatives is single-peaked in Black's sense, along an axis which may change between triples. Arrow showed that  domains satisfying this weaker condition will be Condorcet domains, where pairwise majorities lead to a transitive ranking,  just like Black's domain. Following Raynaud \cite{raynaud1981paradoxical} this class is now called Arrow's  single-peaked domains. Inada  \cite{Inada64}  showed that this class is more general than Black's, in that there are additional maximal Arrow's single-peaked domains, apart from Black's original domain.

The structure of Arrow's single-peaked domains has been studied by several authors after Inada. \cite{romero1978variations} constructed an infinite sequence of single-peaked domains, distinct from Black's.  \cite{raynaud1981paradoxical} showed that all Arrow's single-peaked domains have size at most $2^{n-1}$, just like Black's original domain. These works were unified by  \cite{SLINKO2019166} who showed that all maximal Arrow's single-peaked domains have size exactly $2^{n-1}$ and gave a recursive description of the structure of all domains in the class.  From the latter it also follows  that all such domains have at most two alternatives which can be ranked last in any order from the domain, and that every alternative is ranked first by some order,  i.e. these domains are \emph{minimally rich}, as defined by \cite{Aswal03}. Slinko also enumerated the non-isomorphic Arrow's single-peaked domains on 4 and 5 alternatives respectively,  a result which was later extended up to 7 alternatives in \cite{akello2023condorcet} and $n=8$ in \cite{liversidge2020counting}.  

In this paper we will continue the study of Arrow's single-peaked domains in several ways. First we report the result of a computational search for non-isomorphic Arrow's single-peaked domains. The result of this search verifies the existing counts for such domains up to $n=8$ and extends this to $n=9$.  In terms of many structural properties all Arrow's single-peaked domains are equivalent, for example they all have the same local diversity as defined in \cite{abundance}, and  they are all minimally rich domains.  In Section \ref{sec:rich} we extend the property of being minimally rich to a quantitative version, where we count the number of initial ranks at which every alternative can appear. Having richness at least 1 corresponds to being minimally rich and as we shall see black's single-peaked domain provides an example obtaining the  largest possible richness for an Arrow's single-peaked domain, while not being the only Arrow's single-peaked  domain with that richness. 

In Section \ref{sec:systems} we first connect Arrow's single-peaked domains to the analysis by Dasgupta and Maskin \cite{dasgupta2008robustness} of domains on which various voting rules are well-behaved. Here the focus is on rules which pick a single winner, rather than giving a ranking of all candidates.  We show that the domains on which the plurality rule satisfies  Nash's version \cite{Nash50} of independence of irrelevant alternatives  are duals of  Arrow's single-peaked domains. These domains can be seen as an Arrow type generalisation of the single-dipped domain, just as the Arrow's single-peaked domains generalise Black's single-peaked domain. As a consequence we find that, in any domain where more than two alternatives can be ranked first, the plurality rule will fail to satisfy  Nash's  version of independence of irrelevant alternatives.  

Following this we explicate results by \cite{barbie2006non}  and \cite{dasgupta2008robustness} by identifying the maximum domains on which the Borda count satisfies either Arrow's, or Nash's versions of independence of irrelevant alternatives and compare their domain sizes and number of possible winners.  Finally we give a short discussion of our results.

\section{Notation and definitions}

Let a finite set  $X_n=[n]=\{1, \dots, n\}$ be the set of alternatives. Let $L(X_n)$ be the set of all linear orders over $X$.  Each agent $i \in N$ has a preference order $P_i$ over $X_n$ (each preference order is a linear order). 

A subset of preference orders $D\subseteq L(X)$ is called a \emph{domain} of preference orders. A domain $D$ is a \emph{Condorcet domain} if whenever the preferences of all agents belong to the domain, the majority relation of any preference profile with an odd number of agents is transitive. A Condorcet domain $D$ is \emph{maximal} if every Condorcet domain $D'\supset D$ (on the same set of alternatives) coincides with $D$. A Condorcet domain $D$ is \emph{unitary} if it contains the order $12\ldots n$. Every Condorcet domain can be made unitary by renaming alternatives. However, this should not automatically be seen as introducing a meaningful societal axis.  We will mainly consider unitary domains.

The \emph{restriction} of a domain $D$ to a subset $A\subset X$ is the set of linear orders from $L(A)$ obtained by restricting each linear order from $D$ to $A$. For set complements we  will use the notation  $\overline{A}=[n]\setminus A$, where $A\subseteq [n]$.

The \emph{dual} of a domain $D$ is the domain obtained by reversing each linear order in $D$.

Following \cite{akello2023condorcet} we say that an alternative $x$ is a \emph{fixed point} of a domain $D$ if every order in $D$ assigns the same rank to $x$.

Sen \cite{Sen1966} proved that for each Condorcet domain, restriction of the domain to each triple $(a,b,c)$ of alternatives satisfies a never condition of one of the forms: $x$ not top, $x$ not middle, $x$ not bottom/last, for some alternative $x$ in the triple.   Here top, middle, and bottom/last refer to being ranked first, second, and third respectively.

For a unitary domain one can describe never conditions in a form given by Fishburn; $iNj$, $i,j\in [3]$. $iNj$ means that $i^{th}$ alternative from the triple according to the global ascending order does not get ranked $j^{th}$ within this triple in any order from the domain. Unitary domains have at most six types of never conditions. 

A domain $D$ is single-peaked, in the sense of Black, if there exists an order $>$ in $D$  such that for every other order $\succ$ in $D$ and every alternative $x$  the set $\{y:y \succ x\}$ forms an interval in $>$.  The order $>$ is often referred to as either the axis or the spectrum of the domain.

A domain is an Arrow's single-peaked domain if and only if every triple of alternatives satisfies a never bottom condition, or in the Fishburn version a never condition of the form $xN3$ for some $x$ from the triple. For a unitary domain this means that every triple satisfies either $1N3$ or $2N3$.  

We will say that a domain is an Arrow's single-dipped domain if it is the dual of an Arrow's single-peaked domain, or equivalently if every triple satisfies a never top condition. 

A \emph{profile}  is a vector $\mathbf{R}=(R_1,R_2,\ldots,R_N)$ such that $R_i$ is the preferences of voter/agent $i$. 

For many of our computational results  we have used the CDL library \cite{zhou2023cdl}, which provides functions for the study of Condorcet domains and other types of restricted preference orders. Whenever possible we have also verified those results by using Mathematica. 

\section{Enumeration of Arrow's single-peaked domains}\label{sec:gen}
In order to extend the existing enumerations of Arrow's single-peaked domains and get a complete data set to make available online we have implemented two variations of a search algorithm using the functionalities provided by the  CDL package.  

The implementations made here are completely independent of the codes and methods used in \cite{akello2023condorcet} and \cite{liversidge2020counting}, thereby providing a three-way independent correctness check of the data up to $n=7$, and a two-way check for $n=8$. 

All the domains generated here are publicly available for download at \cite{Web1}.

\subsection{The search algorithm}
The CDL library \cite{zhou2023cdl} provides basic functionality for implementing searches for different types of domains.    In our basic set up we do not work directly with  a domain but instead with never conditions which define it.   So, we represent a (partial) domain by a string of length ${n \choose 3}$, one symbol per triple of alternatives and as symbols we use the two possible Fishburn type never conditions and an empty symbol. Each Arrow's single-peaked domain then corresponds to a string with no empty symbols.

We first implemented a simple depth-first search algorithm, which starts with a string of only empty symbols and then replaces these with never conditions.  As in \cite{zhou2023new} a database of restrictions to a fixed number of alternatives is used to reject those strings which cannot lead to maximal domains, e.g. because the partial domain is not copious. In the end, the set of generated domains was reduced to isomorphism class representatives using CDL.
This basic algorithm can generate all non-isomorphic Arrow's single-peaked domains for $n \leq 8$ alternatives in a reasonable amount of time. However, the number of CDs on nine alternatives is much larger and leaving the isomorphism reduction to the end becomes too inefficient, prompting us to design an improved algorithm. 

The second algorithm is still a depth-first search but instead of removing isomorphic versions of a domain at the end of the search, the second algorithm performs this reduction during the search. When an ordering of triples of alternatives, and an ordering of the set of never conditions together with the empty symbol, are fixed, the set of strings corresponding to (partial) domains can be sorted by comparing their corresponding list of never conditions lexicographically. We will now aim to keep the lexicographically smallest member of each isomorphism class. This in turn means that the empty symbol should be taken as the largest element in the order for the symbols.

We can now apply permutations of the names of the alternatives and map the string of a (partial) domain to the string of an isomorphic (partial) domain.  Given a permutation $g$ on the set of alternatives from the assigned triples, a triple $(i, j, k)$ with the never condition $xNp$  will be transformed to a new triple $(g(i), g(j), g(k))$ with the never condition $g(x)Np$. If there is no never condition for the first triple then the transformed triple will not have one either.   Applying the permutation $g$ to all triples and listing the new triples according to the predefined ordering of triples leads to a string which we can compare lexicographically with the original list of triples and rules. If the new string is lexicographically smaller than the original string, then the search terminates for the original string, since it  cannot lead to a lexicographically minimal string.  This new search algorithm removes a large part of the entire search space, and in the end, all the surviving CDs are non-isomorphic. 

Our new form of isomorphism rejection is general, meaning it can also be applied to some other types of domains, and will be included in the CDL package.

\subsection{Results}
Let $a(n)$ denote the number of non-isomorphic maximal Arrow's single-peaks domains on $n$ alternatives. For $n=3,\ldots,9$ we have
$$a(n)=1,2,6,40,560,17024, \bf{1066496}$$
Up to $n=7$ these numbers agree with the enumerations from \cite{SLINKO2019166} and \cite{akello2023condorcet}, and the software implementations are completely independent. For $n=8$ our value agrees with that given in the preprint \cite{liversidge2020counting}. The value for $n=9$ is new, and as we will see in the next section our data passes a non-trivial correctness check. 

\section{Richness in general domains and Arrow's single-peaked domains}\label{sec:rich}
Our next step is to provide a quantitative refinement of the property of being minimally rich.   A domain is minimally rich if every alternative is ranked first by some order from the domain. Here we now ask, for how many of the consecutive highest ranks do we see all alternatives?  More formally:
\begin{definition}
A domain $D$ is $k$-rich  if for every $k'\leq k$ and every alternative $j$ there is an order in $D$ which gives alternative  $j$ rank $k'$.  

A domain has richness $k$ if it is $k$-rich but not $(k+1)$-rich.  
\end{definition}

Being 1-rich is equivalent to being minimally rich.   For general domains, not just Arrow's single-peaked ones,  there are several natural interpretations of richness.  First, we can see this as  measure of the amount of variety of highest ranked alternatives. So a domain is 1-rich if every alternative can win an election on this domain, and with higher richness we also find that each alternative  can be ranked everywhere in an interval of top positions.  Second, we can view this as a robustness measure for the usual minimal richness, since a $k$-rich domain is guaranteed to remain minimally rich if the top ranked alternative in every order is removed.

The unrestricted domain, consisting of all linear orders on $X$, has richness $n$, the maximum possible.  A much smaller domain  with richness $n$ can be constructed by taking the $n$ cyclic shifts of a single order on $X$.  For Condorcet domains the largest possible richness as a function of $n$ is not known. Using the data from \cite{akello2023condorcet} we tested the richness of all non-isomorphic maximal Condorcet domains for $n \leq 6$. For $n=3$ the maximum is 2  and for $n=4,\ldots 6$ the maximum is 4. For $n=4$ richness 4 is achieved by one of the two symmetric Condorcet domains of size 8. It is natural to ask how this develops for larger $n$.
\begin{question}
    What is the maximum possible richness of a Condorcet domain on $n$ alternatives?
\end{question}

We have computed the richness for all Arrow's single-peaked domains up to $n=9$, and display the results in Table \ref{tab:rich}.  In our data we observe that the smallest richness is 2 for every $n$  and that the maximum richness behaves like $\lfloor n/2 \rfloor+1$, for this range of values of $n$. We also note that the number of non-isomorphic domains of richness 2 is exactly equal to $a(n-1)$.

\begin{table}[h]
\centering
\setlength{\tabcolsep}{9pt}
\begin{tabular}{*{6}{c}}
\toprule
& \multicolumn{4}{c}{\textbf{Richness}} \\
\cmidrule(lr){2-5}
\textbf{n} & 2 & 3 & 4 & 5    \\
\midrule
3 & 1 &  - & -  & -  \\
4 &  1 & 1  & -  & -  \\
5 &  2 & 4  & -  &  -  \\
6 &  6  & 31   & 3  & - \\
7 &  40 & 439  &  81  & -  \\
8 & 560  &  12327    & 4101    & 36   \\
9 & 17024  & 696497  & 346635  & 6340   \\
\bottomrule
\end{tabular}
\caption{Richness for Arrow's single-peaked domains}\label{tab:rich}
\end{table}

\cite{SLINKO2019166} showed that all Arrow's single-peaked domains are minimally rich and our first result strengthens this slightly. 
\begin{lemma}\label{lem:2rich}
    All maximal Arrow's single-peaked domains are 2-rich.
\end{lemma}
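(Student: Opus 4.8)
The plan is to use the recursive structure of maximal Arrow's single-peaked domains established by \cite{SLINKO2019166}. From that work we already know every such domain $D$ is minimally rich (1-rich), so every alternative appears at rank 1 in some order of $D$; the task is to show every alternative also appears at rank 2. First I would recall from \cite{SLINKO2019166} that a maximal Arrow's single-peaked domain on $[n]$ has size exactly $2^{n-1}$ and, by their recursive description, is built from a maximal Arrow's single-peaked domain $D'$ on $n-1$ of the alternatives by reinserting the remaining alternative, say $z$, into every order of $D'$ in one of exactly two "extremal" ways — this is the mechanism behind the fact that at most two alternatives can be ranked last. Equivalently, one can phrase the recursion on the never-conditions: for each triple containing $z$ either $z$ is placed so that the relevant $xN3$ conditions are respected, which forces $z$ to the top or bottom of each order relative to $D'$.

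The key steps, in order: (1) Fix a target alternative $j$ and a maximal Arrow's single-peaked domain $D$; we want an order in $D$ with $j$ in position $2$. (2) Use minimal richness to get an order $P \in D$ with some alternative $a$ first and, separately, an order with $j$ first. (3) Argue by induction on $n$: the base cases $n = 3, 4$ can be checked directly from the small classification (for $n=3$, the two Arrow's single-peaked domains are easily seen to be 2-rich since they have size $4$). (4) For the inductive step, let $z$ be one of the alternatives that is "last-insertable" in the recursive construction, so $D$ restricts to a maximal Arrow's single-peaked domain $D'$ on $[n]\setminus\{z\}$ which is 2-rich by induction. If $j \neq z$, take an order of $D'$ putting $j$ in rank $2$ (when $|[n]\setminus\{z\}| \ge 3$), and check that the corresponding order(s) of $D$ obtained by reinserting $z$ either keep $j$ in rank $2$ (if $z$ is inserted below $j$) — and since $z$ can be inserted at the bottom, this insertion is always available. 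If $j = z$, I would instead use that $z$, being only "top-or-bottom insertable," is inserted at the very top of at least one order of $D'$ that itself puts some alternative in rank $1$; reinserting $z$ at the top of an order of $D'$ whose top alternative will then slide to rank $2$ shows $z$ reaches rank $2$ — here I need an order of $D'$ where $z$ is inserted at the top, which exists because $D$ has size $2^{n-1} = 2 \cdot |D'|$, so both insertion modes are used for every order of $D'$, in particular the top insertion is used on some order, giving $z$ rank $1$ and also (applying top insertion to a different order, or using that the sliding alternative varies) $z$ adjacent to whatever was on top.

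The main obstacle I expect is the case $j = z$ together with small-$n$ boundary effects: when we peel off $z$ the remaining domain is on only $n-1 \ge 3$ alternatives, and I must make sure the reinsertion that realizes rank $2$ for $z$ genuinely occurs, which requires knowing that \emph{every} order of $D'$ receives both the top-insertion and the bottom-insertion of $z$ across $D$ (not merely that each mode is used somewhere). This follows from the size count $|D| = 2^{n-1}$ forcing the reinsertion map to be $2$-to-$1$ from $D$ onto $D'$ with the two preimages using the two modes, but I should state this carefully and confirm it is what \cite{SLINKO2019166}'s recursion provides. A secondary subtlety is that "top" and "bottom" insertion in the recursion are relative to a triple-by-triple condition rather than literally position $1$ and position $n$; I would verify that the conditions indeed force $z$ to a global extreme in each order (which is exactly why only two alternatives can be globally last), so that "bottom insertion of $z$" leaves all other ranks shifted up by at most one and in particular preserves an alternative sitting in rank $2$. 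Alternatively, if the recursive bookkeeping proves cumbersome, a cleaner route is a direct argument: take any triple $\{j, a, b\}$; its restriction satisfies $1N3$ or $2N3$, and combined with minimal richness (some order has $j$ first, some has $a$ first, some has $b$ first) one can locate an order where $j$ is in rank $2$ by a short case analysis on which never-condition holds — I would try this first as it may avoid the induction entirely, with the obstacle there being to show such an order is actually present in a \emph{maximal} domain rather than merely consistent with the never-conditions.
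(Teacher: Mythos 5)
Your overall strategy (induction on $n$ with base cases $n=3,4$, using the recursive decomposition from \cite{SLINKO2019166}) is the same as the paper's, and your case $j\neq z$ is essentially the paper's argument: the orders with $z$ ranked last restrict to a maximal Arrow's single-peaked domain on $n-1$ alternatives, which is 2-rich by induction, and appending $z$ at the bottom preserves ranks 1 and 2. However, the case $j=z$ contains a genuine gap rooted in a mistaken model of the recursion. Slinko's decomposition does \emph{not} say that a single alternative $z$ is reinserted at the top or bottom of every order of one fixed domain $D'$; if that were true, $z$ would attain only ranks $1$ and $n$ and the lemma would be \emph{false} for $n\geq 4$. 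What the decomposition actually provides is two terminal alternatives $a_1,a_2$, a partition of $D$ into $\tilde{D}_1$ and $\tilde{D}_2$ according to which of them is ranked last, and the fact (Lemma 4.3 of \cite{SLINKO2019166}) that each $\tilde{D}_i$ restricted to $A\setminus a_i$ is a maximal Arrow's single-peaked domain on $n-1$ alternatives. Moreover, your stated conclusion for this case does not deliver what is needed: inserting $z$ at the top of an order puts $z$ at rank 1 and the displaced alternative at rank 2, so it never exhibits $z$ at rank 2.

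The repair is exactly the paper's move: to see that the terminal alternative $a_1$ attains rank 2, work inside the \emph{other} half $\tilde{D}_2$, whose restriction to $A\setminus a_2$ is a maximal Arrow's single-peaked domain containing $a_1$ and is 2-rich by induction; since only the last-ranked element $a_2$ is removed in that restriction, ranks 1 and 2 are unaffected, so $a_1$ already occupies ranks 1 and 2 in $\tilde{D}_2\subseteq D$ (and symmetrically for $a_2$ in $\tilde{D}_1$). Your fallback ``direct argument'' via a single triple and minimal richness is only sketched and, as you yourself note, runs into the problem of showing the desired order actually belongs to the maximal domain rather than merely being consistent with the never-conditions; as written it does not close the gap.
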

\begin{proof}
    The statement is true for $n=3$ and $n=4$.
     
    Let $a_1$ and $a_2$ be  the two terminal alternatives of $D$ and $\Tilde{D_i}$ be the set of orders with $a_i$ as the terminal alternative.
     
    Let $D_i$ be the restriction of $\Tilde{D_i}$ to $A\setminus a_i$. By Lemma 4.3 of \cite{SLINKO2019166} each $D_i$ is a maximal Arrow's single-peaked domain, and by induction each $D_i$ is also 2-rich.

    Hence $\Tilde{D_i}$ has all the alternatives from $A\setminus a_i$ in the two highest ranks  and $a_i$ as the unique terminal alternative. In turn,  $D$ is the union of $\Tilde{D_1}$ and $\Tilde{D_2}$, so $D$ has all alternatives in rank 1 and 2. 
\end{proof}

We will refer to the set of ranks used by an alternative $a$ as the \emph{range} of $a$.  The range of an alternative in a maximal Arrow's single-peaked domains is strongly constrained.
\begin{proposition}\label{prop:int}
    The set of ranks used by an alternative in a maximal Arrow's single-peaked domain  is an interval starting at 1.
\end{proposition}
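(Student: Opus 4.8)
The plan is to use the same recursive structure from \cite{SLINKO2019166} that powered Lemma \ref{lem:2rich}, and to show by induction on $n$ that the range of every alternative is an interval of the form $\{1,2,\dots,r\}$. Combined with Lemma \ref{lem:2rich}, the base cases $n=3,4$ are immediate, and we may assume $n\geq 5$. Let $a_1,a_2$ be the two terminal alternatives of $D$, let $\tilde D_i$ be the set of orders in $D$ with $a_i$ last, and let $D_i$ be the restriction of $\tilde D_i$ to $A\setminus\{a_i\}$; by Lemma 4.3 of \cite{SLINKO2019166} each $D_i$ is a maximal Arrow's single-peaked domain on $n-1$ alternatives, so by the inductive hypothesis the range of each alternative inside $D_i$ is an interval starting at $1$.

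First I would handle $a_1$ and $a_2$. Each $a_i$ is the unique terminal alternative of $\tilde D_i$, so it attains rank $n$; I must show it attains every rank $1,\dots,n$. Since $D=\tilde D_1\cup\tilde D_2$ and $a_1$ is \emph{not} terminal in any order of $\tilde D_2$ (its only terminal alternatives being $a_2$), $a_1$ appears in $D_2$, where by induction its range is an interval starting at $1$; moreover that interval has length $n-1$, because a maximal Arrow's single-peaked domain on $n-1$ alternatives has richness $n-1$ for its non-terminal alternatives only if\dots—here I need the sharper fact that in $D_i$ the alternatives other than the (at most two) terminal ones of $D_i$ have full range $\{1,\dots,n-1\}$; this itself follows from the induction once one checks that $a_{3-i}$ (the other original terminal alternative) is not terminal in $D_i$. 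Then $a_1$ already has range $\supseteq\{1,\dots,n-1\}$ in $D_2\subseteq D$, and rank $n$ from $\tilde D_1$, giving range $\{1,\dots,n\}$; symmetrically for $a_2$.

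Next I would handle a non-terminal alternative $x\notin\{a_1,a_2\}$. Then $x$ appears in both $D_1$ and $D_2$; in $D_i$ its range is an interval $\{1,\dots,r_i\}$. When we pass from $D_i$ back to $\tilde D_i$ by re-inserting $a_i$ at the bottom, the rank of $x$ is unchanged (inserting at the last position does not disturb anyone's rank). Hence the range of $x$ in $\tilde D_i$ is $\{1,\dots,r_i\}$, and the range of $x$ in $D=\tilde D_1\cup\tilde D_2$ is $\{1,\dots,\max(r_1,r_2)\}$, an interval starting at $1$. This completes the induction.

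The main obstacle is the middle step: cleanly establishing, as part of a strengthened inductive statement, that the alternatives which are \emph{not} terminal in a maximal Arrow's single-peaked domain on $m$ alternatives use the full range $\{1,\dots,m\}$ (equivalently, that only the at-most-two terminal alternatives can fail to reach the bottom rank). I expect this to follow from the recursion of \cite{SLINKO2019166} together with the observation that the two terminal alternatives of $D$ are interchanged or fixed appropriately when restricting, but getting the bookkeeping of ``which alternatives are terminal in $D_1$ versus $D_2$'' exactly right—in particular that $a_{3-i}$ is non-terminal in $D_i$—is the delicate point; once that is in hand, everything else is a routine union-of-two-intervals argument.
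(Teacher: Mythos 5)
Your overall strategy coincides with the paper's: induct on $n$ via the Slinko decomposition $D=\tilde D_1\cup\tilde D_2$, handle non-terminal alternatives as a union of two intervals each starting at $1$, and treat the terminal alternatives separately. The non-terminal case in your write-up is correct. But the step you yourself flag as delicate is where the proposal breaks. You assert that $a_{3-i}$ is \emph{not} terminal in $D_i$, and you try to lean on a ``sharper fact'' that the non-terminal alternatives of a maximal Arrow's single-peaked domain on $m$ alternatives have full range $\{1,\dots,m\}$. That auxiliary claim is self-contradictory: an alternative attains rank $m$ if and only if it is terminal, so a non-terminal alternative can never have full range. Worse, the assertion about $a_{3-i}$ is the opposite of what is true and of what you need: if $a_1$ were non-terminal in $D_2$, its range in $D_2$ would lie in $\{1,\dots,n-2\}$, and its range in $D$ would be $\{1,\dots,n-2\}\cup\{n\}$ --- not an interval, i.e., the proposition would fail.

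What must actually be proved is that $a_1$ \emph{is} a terminal alternative of $D_2$ (equivalently, that $a_1$ attains rank $n-1$ in $D$, in an order ending with $a_2$), and symmetrically for $a_2$; then induction gives $a_1$ the range $\{1,\dots,n-1\}$ inside $D_2$, and the rank $n$ coming from $\tilde D_1$ completes the interval. The paper obtains this from maximality rather than from the recursion alone: since $a_1$ and $a_2$ are both terminal in $D$, no triple $(x,a_1,a_2)$ can carry the condition $a_1N3$ or $a_2N3$, so it carries $xN3$; consequently, for any order $\sigma_1$ of $D$ restricted to $A\setminus\{a_1,a_2\}$, the linear order obtained by appending $a_2$ at rank $n-1$ and $a_1$ at rank $n$ violates no never condition of $D$ and hence, by maximality, already belongs to $D$. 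This single maximality argument is the missing ingredient; without it (or some substitute) your induction does not close.
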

\begin{proof}
    Assume that $D$ is a maximal Arrow's single-peaked domain on $n$ alternatives and let $a_1$ and $a_2$ be the two terminal alternatives.

    The proposition is true for $n=3$. We will assume that it is true for less than $n$ alternatives.
 
    Assume that $a_2$ does not appear with rank $n-1$. For any other alternative $x$ the triple $(x,a_1,a_2)$ must have  $xN3$ as its never condition. Let $\sigma_1$ be an order from $D$ restricted  to $A\setminus\{a_1,a_2\}$. Let $\sigma_2$ be the order obtained from $\sigma_1$ by giving $a_2$ rank $n-1$ and $a_1$ rank $n$. Adding $\sigma_2$ to $D$ does not violate any never condition. So, the assumption that $a_2$ does not get rank $n-1$ contradicts the maximality of $D$.

    From this it follows that $a_1$ is a terminal alternative of $D_2$, and $a_2$ one for $D_1$. For each $D_i$ the proposition is true by induction.   Since $D$ is the union of $\tilde{D}_1$ and $\tilde{D}_2$ the set of ranks used by a non-terminal alternative $y$ is the union of two intervals, which is an interval. For the two terminal alternatives the set of ranks is the interval up to $n-1$ together with $n$, which is also an interval.
\end{proof}
For any alternative $a$ let $r_a$ denote the maximum value such that $a$ appears with all ranks in the interval $[1,\ldots,r_a]$. 

\cite{SLINKO2019166} showed that an Arrow's single-peaked domain has at most two terminal alternatives. That result is a special case of the following lemma.
\begin{lemma}\label{lem:pair}
    For each alternative $i$ there is at most one other alternative $j$  with $r_i=r_j$ in an Arrow's single-peaked domain.
\end{lemma}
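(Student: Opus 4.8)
The plan is to argue by induction on $n$, using the recursive structure of maximal Arrow's single-peaked domains exactly as in the proofs of Lemma \ref{lem:2rich} and Proposition \ref{prop:int}. For $n=3$ one checks the claim directly from the finitely many domains. For the inductive step, let $D$ be a maximal Arrow's single-peaked domain on $n$ alternatives with terminal alternatives $a_1$ and $a_2$, let $\tilde D_i$ be the set of orders of $D$ with $a_i$ terminal, and let $D_i$ be the restriction of $\tilde D_i$ to $[n]\setminus\{a_i\}$. By Lemma 4.3 of \cite{SLINKO2019166} each $D_i$ is a maximal Arrow's single-peaked domain on $n-1$ alternatives, and by the proof of Proposition \ref{prop:int} we also know $a_1$ is terminal in $D_2$ and $a_2$ is terminal in $D_1$.

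The key observation is how $r_a$ transfers between $D$ and the $D_i$. For a non-terminal alternative $y$ (so $y\ne a_1,a_2$), the ranks that $y$ attains in $\tilde D_i$ are exactly the ranks it attains in $D_i$ (since deleting the terminal $a_i$, which sits in rank $n$, does not change the rank of anything), and by Proposition \ref{prop:int} the range of $y$ in $D$ is an interval starting at $1$; hence $r_y^D = \max(r_y^{D_1}, r_y^{D_2})$, and moreover $y$ appears in $D_i$ with every rank in $[1, r_y^D]$ for at least one $i$. For the terminal alternatives, $r_{a_1}^D = r_{a_2}^D = n-1$: indeed $a_1$ attains every rank $1,\dots,n-2$ as a non-terminal alternative inside $\tilde D_2$ (equivalently $D_2$, where it is terminal, so $r_{a_1}^{D_2}=n-2$), it attains rank $n$ in $\tilde D_1$, but it is the unique alternative attaining rank $n$ there so it never attains rank $n-1$ when it is terminal; by the same argument used in Proposition \ref{prop:int}, maximality forces $a_1$ to attain rank $n-1$ inside $\tilde D_2$, so $r_{a_1}^D = n-1$, and symmetrically for $a_2$.

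Now suppose for contradiction that three distinct alternatives $x,y,z$ satisfy $r_x^D = r_y^D = r_z^D =: r$. Since $r_{a_1}^D = r_{a_2}^D = n-1$ and these are the only two alternatives with that value, at most one of $x,y,z$ equals a terminal alternative, so at least two of them — say $x$ and $y$ — are non-terminal with $r_x^D = r_y^D = r$. For each of $x$ and $y$, as noted above, there is an index $i$ for which $x$ (resp.\ $y$) attains all ranks $1,\dots,r$ already in $D_i$; since there are only two indices, by pigeonhole the \emph{same} index $i$ works for both, say $i=1$. But then $x$ and $y$ are two distinct alternatives of the $(n-1)$-alternative maximal Arrow's single-peaked domain $D_1$ with $r_x^{D_1} = r_y^{D_1} = r$, and if the third alternative $z$ is also non-terminal with its witnessing index equal to $1$ we get three such alternatives in $D_1$, contradicting the induction hypothesis. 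The one case requiring a little more care is when $z$ is terminal (so $r = n-1$ and, say, $z = a_2$): then $x$ and $y$ are non-terminal with $r_x^D = r_y^D = n-1$, forcing them to attain rank $n-1$, which for a non-terminal alternative can only happen inside one of the $\tilde D_i$; one checks that this again pushes a forbidden configuration down into some $D_i$, contradicting induction.

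The main obstacle I anticipate is the bookkeeping in this last step: one must verify carefully that a non-terminal alternative's full initial segment of ranks $[1,r]$ is realized within a \emph{single} $\tilde D_i$ (and hence within a single $D_i$), rather than being split across $\tilde D_1$ and $\tilde D_2$ — the interval property from Proposition \ref{prop:int} is what makes this work, because an initial segment cannot be a "proper union" of the two pieces unless one piece already contains all of it. Getting the terminal-versus-non-terminal case analysis exactly right, and confirming that $r_{a_1}^D = r_{a_2}^D = n-1$ with no other alternative reaching $n-1$, is where the argument needs to be written out with care.
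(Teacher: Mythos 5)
Your inductive approach has a genuine gap at the pigeonhole step, and the induction does not close even if that step is repaired. First, ``since there are only two indices, by pigeonhole the same index $i$ works for both'' is applied to only \emph{two} alternatives $x$ and $y$; with two items and two boxes nothing forces them into the same box, so $x$ may be witnessed only in $D_1$ and $y$ only in $D_2$. Second, even if you apply pigeonhole to all three of $x,y,z$ (when all are non-terminal), you only conclude that \emph{two} of them share a witnessing index, i.e.\ that some $D_i$ contains two alternatives with equal $r$-value --- but that is exactly what the induction hypothesis \emph{permits}. To get a contradiction you would need all three witnesses to land in the same $D_i$, and nothing in your argument rules out a $2$--$1$ split. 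The terminal case is only sketched (``one checks that this again pushes a forbidden configuration down''), and there is also a definitional slip: a terminal alternative attains every rank $1,\dots,n$ (by Proposition \ref{prop:int} its range is the full interval), so $r_{a_1}^D=n$, not $n-1$; moreover non-terminal alternatives can also have $r$-value $n-1$ (e.g.\ the second and penultimate alternatives on the axis of Black's domain), so the claim that the terminal pair are ``the only two alternatives with that value'' is invoked with the wrong value.

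The paper's proof is a two-line direct argument that needs no recursion and no maximality machinery: if $i,j,k$ all have the same range, let $r$ be its maximum and let $x$ be whichever of the three is supposed to be never-bottom in that triple. Take an order in which $x$ has rank $r$; the other two then occupy ranks from the common range other than $r$, hence both sit strictly above $x$, so $x$ is bottom in the triple. Since this works for each choice of $x$, the triple satisfies no never-bottom condition, contradicting the definition of an Arrow's single-peaked domain. It is worth looking for this kind of direct exploitation of the never-condition before setting up a recursive decomposition.
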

\begin{proof}
    Assume that $i,j,k$ have the same range, then the triple $(i,j,k)$ cannot satisfy a never bottom condition, since they all appear with rank $r_i$. This contradicts that the domain is Arrow's single-peaked.   Hence there are at most two $i,j$ with $r_i=r_j$. 
\end{proof}
The result from  \cite{SLINKO2019166}  is obtained by letting $i$ be any terminal alternative of the domain.  The proposition and lemma together give the following:
\begin{corollary}
    {\ }
    \begin{enumerate}
        \item An Arrow's single-peaked domains with richness $r$ has at least one and at most two alternatives which are only given ranks 1 to $r$.
        
        \item The terminal alternatives are the only two alternatives which appear with all ranks.    
    \end{enumerate}
\end{corollary}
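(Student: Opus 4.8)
The plan is to derive both parts directly from Proposition~\ref{prop:int} and Lemma~\ref{lem:pair}, which together pin down the range of every alternative very tightly. The one observation to make first is a reinterpretation of $r_a$: by Proposition~\ref{prop:int} the range of $a$ is an interval $[1,\dots,m]$ for some $m$, and this $m$ is exactly $r_a$. Consequently ``$a$ is only given ranks $1$ to $r$'' is equivalent to $r_a\le r$, and ``$a$ appears with all ranks'' is equivalent to $r_a=n$. With this dictionary in hand the corollary becomes a counting statement about the values $r_a$.

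For part~1, suppose $D$ has richness exactly $r$. Since $D$ is $r$-rich every alternative receives each of the ranks $1,\dots,r$, so $r_a\ge r$ for every $a$. Since $D$ is not $(r+1)$-rich there is some alternative $j$ that is never ranked $(r+1)$-st; as the range of $j$ is an interval starting at $1$ and already contains $1,\dots,r$, it must be exactly $[1,\dots,r]$, so $r_j=r$ and $j$ is only given ranks $1$ to $r$. This gives the ``at least one'' half. For the ``at most two'' half, note that any alternative only given ranks $1$ to $r$ satisfies $r_a\le r$, hence $r_a=r$ by the previous sentence; Lemma~\ref{lem:pair} forbids three alternatives from sharing the common value $r$, so there are at most two such alternatives.

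For part~2, an alternative appears with all ranks precisely when its range contains $n$; by Proposition~\ref{prop:int} this happens exactly when the range is $[1,\dots,n]$, i.e. exactly when the alternative can be ranked last, i.e. exactly when it is a terminal alternative. Thus the set of alternatives appearing with all ranks coincides with the set of terminal alternatives, which has size at most two by Slinko's result (equivalently, by applying Lemma~\ref{lem:pair} to a terminal alternative, which has $r$-value $n$). Since a maximal Arrow's single-peaked domain has exactly two terminal alternatives, these are ``the only two'' alternatives appearing with all ranks.

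I do not expect a real obstacle here; the content is essentially bookkeeping on top of the two preceding results. The only point requiring a little care is the dictionary in the first paragraph: without Proposition~\ref{prop:int} the range of an alternative could be an arbitrary subset of $[n]$, ``only given ranks $1$ to $r$'' would not translate into a single inequality on $r_a$, and the clean application of Lemma~\ref{lem:pair} would break down. One should also check explicitly that a terminal alternative attains every rank $1,\dots,n-1$ as well as $n$, which is again immediate from Proposition~\ref{prop:int}.
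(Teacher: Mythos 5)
Your proof is correct and follows exactly the route the paper intends: the paper gives no explicit proof, stating only that the corollary follows from combining Proposition~\ref{prop:int} (ranges are intervals starting at 1) with Lemma~\ref{lem:pair} (at most two alternatives share an $r$-value), which is precisely the bookkeeping you carry out. Your explicit ``dictionary'' translating the corollary's phrasing into statements about $r_a$ is a helpful addition that the paper leaves implicit.
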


Lemma \ref{lem:2rich} showed that all Arrow's single-peaked domains are 2-rich. In the other direction we have an upper bound for the richness of all Arrow's single-peaked domains.
\begin{theorem}
    The richness of an Arrow's single-peaked domain on $n$ alternatives is at most $\lfloor n/2 \rfloor+1$.
\end{theorem}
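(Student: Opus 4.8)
The plan is to derive the bound from a one-line counting argument on the parameters $r_a$. The two facts to assemble are: (i) the richness of any domain $D$ equals $\min_a r_a$, which is immediate from the definitions, since $D$ is $k$-rich exactly when every alternative attains all of the ranks $1,\dots,k$, i.e.\ exactly when $r_a \ge k$ for all $a$; and (ii) Lemma~\ref{lem:pair}, which says that for each value $v$ at most two alternatives $a$ have $r_a = v$. (If one prefers to invoke Lemma~\ref{lem:pair} only in its cleanest setting, one can first note that richness is monotone under inclusion and that every Arrow's single-peaked domain sits inside a maximal one, so it suffices to treat the maximal case; there Proposition~\ref{prop:int} also makes (i) transparent, as the set of ranks of $a$ is then exactly $[1,r_a]$.)

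With these in hand, suppose $D$ has richness $r$, so $r_a \in \{r, r+1, \dots, n\}$ for every one of the $n$ alternatives $a$. By (ii) the map $a \mapsto r_a$ is at most two-to-one, and its image lies in a set of size $n-r+1$, so $n \le 2(n-r+1)$. Rearranging gives $2r \le n+2$, hence $r \le (n+2)/2$, and since $r$ is an integer, $r \le \lfloor n/2 \rfloor + 1$ (one checks both parities: $\lfloor (n+2)/2 \rfloor = \lfloor n/2 \rfloor + 1$).

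I do not expect any real obstacle here: all the structural work has already been carried out in Proposition~\ref{prop:int} and Lemma~\ref{lem:pair}, and what remains is only the pigeonhole inequality above. The one point worth flagging is that the bound is sharp --- Black's single-peaked domain attains it, consistent with the last column of Table~\ref{tab:rich} --- so a short remark, or a pointer to the construction realising equality, would round out the statement, although that is logically independent of the upper bound proved here.
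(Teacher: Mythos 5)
Your proof is correct, but it is a genuinely different argument from the one in the paper. The paper proceeds by induction on $n$: it verifies $n=3,4$, picks alternatives of smallest $r$-value, deletes one or two of them, and uses the fact that restricting a domain to $X\setminus\{a\}$ lowers each rank by at most one to contradict the induction hypothesis. You instead get the bound by a single pigeonhole count: richness equals $\min_a r_a$, each value is attained by at most two alternatives by Lemma~\ref{lem:pair}, and the image of $a\mapsto r_a$ sits in $\{r,\dots,n\}$, so $n\le 2(n-r+1)$ and $r\le\lfloor n/2\rfloor+1$. Your route is shorter, needs no base cases and no deletion/restriction argument, and makes transparent that the bound is exactly the counting consequence of the ``at most two alternatives per $r$-value'' lemma; it also shows immediately why Black's domain is extremal (its map $a\mapsto r_a$ is exactly two-to-one onto $\{\lfloor n/2\rfloor+1,\dots,n\}$, up to the odd-$n$ middle alternative). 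The one point I would promote from your parenthetical to the main line of the argument is the reduction to maximal domains: the proof of Lemma~\ref{lem:pair} really does rely on the range of each alternative being the full interval $[1,r_a]$, which is Proposition~\ref{prop:int} and is established only for maximal domains; since richness is monotone under inclusion and every Arrow's single-peaked domain extends to a maximal one, this reduction is harmless, but it should be stated rather than optional. With that adjustment your argument is complete and, if anything, cleaner than the paper's.
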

\begin{proof}
It is easy to check that the statement is true for $n=3,4$. We will assume that it is true for up to $n-1$ alternatives.

Let  $i$ be an element of minimal richness $r_i$.  If the proposition does not hold for $n$ then $r_i\geq \lfloor n/2 \rfloor+2$.  Let $j,k$ be another two alternatives with as small richness as possible. By Lemma \ref{lem:pair} we either have $r_i=r_j<r_k$  or $r_i<r_j\leq r_k$. 

Note that if $\sigma$ is a linear order on $X$ and we delete one element from $X$, then any other element can have its rank reduced by at most 1.  Hence the richness of a domain can be reduced by at most 1 if we restrict the domain to $X\setminus a$, for some alternative $a$. 

Assume that we have the case $r_i=r_j<r_k$.  Here $r_k\geq \lfloor n/2 \rfloor+3$.  If we delete $i,j$ then the restricted domain has $n-2$ alternatives and by our induction assumption richness at most $\lfloor (n-2)/2 \rfloor+1=\lfloor n/2 \rfloor$.  But we also know that $k$ has  $r_k \geq \lfloor n/2 \rfloor+1$, and no alternative can have smaller range than this, which leads to a contradiction. 

If we have the case $r_i<r_j\leq r_k$ then we can reason in the same way, but now only deleting $i$. 
\end{proof}

As we will now show  both of the richness bounds we have derived are optimal. That is, there are domains with richness 2 and  and domains with richness $\lfloor n/2 \rfloor+1$ for every $n\geq 3$.

Let us say that an Arrow's single-peaked domain is a \emph{skewed single-peaked domain} if there exists an alternative $q$ such that every triple containing $q$ satisfies the never condition  $q$ not last. 
\begin{theorem}
     An Arrow's  single-peaked domain has richness 2 if an only if it is a skewed domain.
        
     The number of non-isomorphic skewed Arrow's single-peaked domains for $n$ alternatives is $a(n-1)$.
\end{theorem}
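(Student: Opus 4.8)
The plan is to establish the two claims separately, working throughout with \emph{maximal} Arrow's single-peaked domains, as the count $a(n-1)$ tacitly demands. The first step is the observation that, for a maximal Arrow's single-peaked domain $D$, being skewed with respect to $q$ is the same as $q$ having range $\{1,2\}$: if some order of $D$ put $q$ in a rank $\geq 3$, the two alternatives outranking $q$ there would, together with $q$, form a triple in which $q$ is last, contradicting the never condition ``$q$ not last''; conversely, if $q$ only ever gets rank $1$ or $2$, then in every triple containing $q$ at most one other alternative outranks $q$, so $q$ is never last in that triple. Given this, Lemma~\ref{lem:2rich} ($2$-richness) makes the range of such a $q$ exactly $\{1,2\}$, so $D$ fails $3$-richness and has richness $2$. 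Conversely, a richness-$2$ domain is $2$-rich but not $3$-rich; since every alternative already attains ranks $1$ and $2$, the obstruction must be that some alternative $q$ never attains rank $3$, and then Proposition~\ref{prop:int} forces the range of $q$ to be the interval $\{1,2\}$, so $D$ is skewed with respect to $q$.

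For the count I would exhibit a bijection between maximal skewed Arrow's single-peaked domains on $n$ alternatives and maximal Arrow's single-peaked domains on $n-1$ alternatives. Send a maximal skewed domain $D$ on $[n]$ with special alternative $q$ to its restriction to $[n]\setminus q$; conversely, from a maximal Arrow's single-peaked domain $E$ on $n-1$ alternatives build $\mathrm{lift}(E)$ by adjoining one new alternative $q$ and, for every order $\tau$ of $E$, including the two orders obtained by inserting $q$ into rank $1$ and into rank $2$. One checks routinely that $\mathrm{lift}(E)$ is Arrow's single-peaked --- triples inside $[n-1]$ keep a never bottom condition from $E$, and any triple meeting $q$ satisfies ``$q$ not last'' because $q$ always occupies rank $\leq 2$ --- and that it is skewed with respect to $q$ by construction. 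It is also maximal: an order $\rho$ that could be added would restrict to a member of $E$ by maximality of $E$ (a violated triple would lie inside $[n-1]$), and $\rho$ would have to place $q$ in rank $\leq 2$, since otherwise two alternatives above $q$ form a triple in which $q$ is last while, by minimal richness of $E$, that triple admits no never bottom condition other than ``$q$ not last''; hence $\rho$ is already one of the adjoined orders.

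Maximality of $D$ then does the real work in proving the two maps mutually inverse: every order of a skewed $D$ has $q$ in rank $\leq 2$, so $D$ is contained in the lift of its restriction to $[n]\setminus q$, and since that lift is Arrow's single-peaked, maximality upgrades the inclusion to an equality; and the restriction is itself maximal, because an order one could add to it could be lifted by placing $q$ on top and adjoined to $D$ without creating any never bottom violation, contradicting maximality of $D$. On isomorphism classes the inverse map is well defined since an isomorphism of two $(n-1)$-alternative domains extends to their lifts; for the forward map one must allow that $D$ is skewed with respect to two alternatives $q_1,q_2$ (at most two, by Lemma~\ref{lem:pair}), but then the restrictions of $D$ to $[n]\setminus q_1$ and to $[n]\setminus q_2$ are both the lift of the common domain obtained by restricting $D$ to $[n]\setminus\{q_1,q_2\}$, hence isomorphic via the transposition of $q_1$ and $q_2$. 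This gives a bijection between isomorphism classes, and therefore the number of non-isomorphic skewed domains on $n$ alternatives equals $a(n-1)$. I expect the genuine difficulty to be precisely this last paragraph --- turning the inclusion $D\subseteq\mathrm{lift}(\cdot)$ into an equality and showing the restriction stays maximal, together with the bookkeeping required when $D$ has two special alternatives; everything else reduces to elementary checks of never conditions.
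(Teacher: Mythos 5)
Your proposal is correct and follows essentially the same route as the paper: identify the skewed alternative $q$ as the one whose range is exactly $\{1,2\}$ (using Lemma~\ref{lem:2rich} and Proposition~\ref{prop:int}), and then set up the restriction/lift correspondence with maximal Arrow's single-peaked domains on $n-1$ alternatives to get the count $a(n-1)$. You are somewhat more explicit than the paper about verifying that the restriction is maximal and that $D$ equals the lift of its restriction, and your case of two special alternatives is actually vacuous for maximal domains (two alternatives of range $\{1,2\}$ would leave no room for a third alternative to be ranked first, contradicting $2$-richness), but these are refinements of the same argument rather than a different one.
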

\begin{proof}
    Let $D$ be  an Arrow's single-peaked domain of richness 2.  Since the domain is not 3-rich there is an alternative $q$ which does not appear with rank 3.   Let us consider the never conditions on triples which contain alternatives $i$, $j$, and $q$. Since this is an Arrow's single-peaked domain each never condition is of the form $xN3$ for some $x$ in the triple, and since both $i$ and $j$ appear with rank 1 and 2 and $q$ does not get rank 3 in $D$, we must have that $q$ is not ranked last in any triple.  Hence $D$ is a skewed domain.

    By relabeling, any skewed domain $D$ is  isomorphic to a domain where the alternative $q$ is in fact $n$.  If the alternative $n$ is removed from $D$ then we are left with a general  Arrow's single-peaked domain on $n-1$ alternatives.

    Given an Arrow's single-peaked domain $D'$ on $n-1$ alternatives we can define a new domain $D$ on $n$ alternatives by adding for any triple $(i,j,n)$ the never condition $3N3$.  This domain will be a skewed single-peaked domain and the alternative $n$ can only appear with rank 1 or rank 2 in any order.   In fact we can partition $D$ into $D_1$ and $D_2$, where $D_1$ is the domain obtained from $D'$ by adding $n$ first to every order from $D'$, and $D_2$ is obtained by inserting $n$ with rank 2 in every order of $D'$.

    The domain $D$ will have richness 2, since $n$ only appears at rank 1 and 2, and all other alternatives appear  with rank 1 in $D_2$ and rank 2  in $D_1$. 
    
    Alternative $n$ is the only alternative which appears with rank 1  in $2^{n-1}$ of the orders, proving that every isomorphism between two skewed single-peaked domains will be determined by an isomorphism acting on the smaller domain $D'$, hence giving us $a(n-1)$ non-isomorphic skewed domains.  
\end{proof}
Following this proof we see that the domain $\mathcal{S}_n$ in which all never conditions are of the form $3N3$, or equivalently $1N3$ if we relabel the domain in order to make it unitary, will play a special role as the skewed domain  which remains skewed after deleting any set of alternatives. This is in some sense the most unbalanced of all Arrow's single-peaked domains.

Opposite to this we have the domain defined by assigning every triple the never condition $2N3$, and this is Black's single-peaked domain. Black's domain achieves the highest richness possible.
\begin{proposition}
    Black's single-peaked domain on $n$ alternatives has richness $\lfloor n/2 \rfloor+1$.
\end{proposition}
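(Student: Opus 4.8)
The plan is to work with the combinatorial description of Black's single-peaked domain. Richness is invariant under relabelling alternatives, so we may assume the axis is $1 < 2 < \cdots < n$; then an order lies in the domain exactly when, for every rank $k$, its top $k$ alternatives form an interval of the axis, equivalently the order is built by choosing a peak and then repeatedly appending the current leftmost or current rightmost not-yet-ranked alternative. By Proposition \ref{prop:int} the set of ranks attained by an alternative $a$ in a maximal Arrow's single-peaked domain is an interval $[1, r_a]$, and Black's domain is such a domain; since a domain is $k$-rich precisely when $k \le r_a$ for every $a$, the richness equals $\min_a r_a$. So it suffices to compute each $r_a$ and then minimise.

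First I would determine $r_j$ for each alternative $j$. If some order of the domain ranks $j$ at position $k$, then the top $k$ alternatives form an interval $I$ of length $k$ containing $j$, and $j$ must be an endpoint of $I$, since it was appended last and appends occur only at the ends; hence $I = [j-k+1, j]$ or $I = [j, j+k-1]$, which forces $k \le j$ or $k \le n+1-j$. Thus $r_j \le \max(j, n+1-j)$. For the reverse inequality, take $k = \max(j, n+1-j)$ and exhibit an explicit legal order: if $j \le n+1-j$, put the peak at $n$, append $n-1, n-2, \dots, j+1$, then $j$ (now at rank $n+1-j$), then $j-1, \dots, 1$; the case $j > n+1-j$ is the mirror image. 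Hence $r_j = \max(j, n+1-j)$.

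It then remains to evaluate $\min_{1 \le j \le n} \max(j, n+1-j)$, which is a one-line computation: the minimum is taken near the centre of the axis, giving $\lfloor n/2\rfloor + 1$ (attained at $j \in \{n/2, n/2+1\}$ for even $n$ and at $j = (n+1)/2$ for odd $n$). Therefore Black's domain is $(\lfloor n/2\rfloor+1)$-rich but not $(\lfloor n/2\rfloor+2)$-rich, i.e. its richness is $\lfloor n/2\rfloor + 1$.

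The only point needing care is the claim that $j$ must be an endpoint of its top-$k$ interval, together with the realisability of every intermediate rank; the latter is supplied for free by Proposition \ref{prop:int}, so once the interval characterisation of Black's domain is recorded the argument reduces to the two short computations above and presents no real obstacle.
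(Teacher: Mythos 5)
Your proof is correct and takes essentially the same route as the paper's: both identify the largest rank attainable by alternative $j$ as its distance $\max(j,\,n+1-j)$ to the furthest endpoint of the axis and then minimise over $j$, with the interval property of rank sets (Proposition \ref{prop:int}) supplying the attainability of all intermediate ranks. Your write-up is somewhat more explicit about why $j$ must be an endpoint of its top-$k$ interval and about the appeal to that proposition, but the underlying argument is the same.
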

\begin{proof}
    Assume that we have the common axis $1,2,\ldots,n$. Here 1 and $n$ will be the two terminal alternatives.

    Let $\sigma$ be any order with $i$ ranked first.  Let $x$ be another alternative.   

    The lowest rank $x$ can have in any such $\sigma$ is equal to 1 plus the number of alternatives between $i$ and $x$ plus the number of alternatives on the other side of $x$. This sum is in fact equal to the distance between $x$ and the furthest endpoint on the common axis.

    For even $n$, letting $x$ be either of  the two alternatives $n/2$ and $n/2+1$ gives the smallest possible maximum for the two end-point distances, which is $n/2+1$.

    For odd $n$, taking $x$ to be $\lceil n/2 \rceil$ gives the smallest possible maximum for the two end-point distances, which is $\lfloor n/2\rfloor+1$.
\end{proof} 
However, as we can see in Table \ref{tab:rich} Black's single-peaked domain is far from alone in achieving the maximum possible richness. Unlike for richness 2 we do not yet have a natural description of the Arrow's single-peaked domains with maximum richness.
\begin{problem}
    Give a structural characterisation of the Arrow's single-peaked domains with maximum richness.
\end{problem}

\section{Arrow's single-peaked domains and axioms on irrelevant alternatives for different  voting systems}\label{sec:systems}
In \cite{dasgupta2008robustness} Dasgupta and Maskin investigated, for several different voting rules, which structural properties a domain must have in order for that voting rule to satisfy five natural axioms. They focused on voting rules for choosing a single winner, like a president,  and their main conclusion is that the simple majority rule has the largest well-behaved domains. As part of this work they gave  two characterisations of the domains on which the plurality rule and rank-order voting, or Borda's rule, satisfy Nash's version of independence of irrelevant alternatives.   Borda's rule has also been studied under other axioms. In \cite{barbie2006non}  the domains on which Borda's rule satisfies Arrow's version of independence of irrelevant alternatives together with another set  of axioms are studied, leading to a broader class of domains than those found under Dasgupta and Maskin's axiom.  We will now show how these domain classes connect to Arrow's single-peaked domains and discuss their size and richness.

First, let us recall the main voting rules discussed: The \emph{simple majority} rule chooses $x$ as the winner if for all other candidates $y$ more voters prefer $x$ than $y$; in \emph{rank-order voting} a candidate gets one point for each voter who ranks them first, two points for each who ranks them second, and so on. The winner is the candidate with the lowest number of points; The \emph{plurality rule},  chooses as winner the candidate which is ranked first by the largest number of candidates; in \emph{runoff voting} the plurality rule is applied if a candidate is ranked first by a majority, if that is not the case then it picks the majority winner among the subset of candidates which are ranked first by the largest number of voters.

\subsection{Plurality and Runoff voting}
Our first aim is to show that the plurality and runoff domains, under Dasgupta and Maskin's axioms,  are in fact exactly the Arrow's single-dipped domains.

The axiom which Dasgupta and Maskin focus much of their discussion on is independence of irrelevant alternatives.  Here one must be careful since over the years several non-equivalent axioms have appeared under that name, see \cite{ray73} for a survey of their interrelations.  The version used in \cite{dasgupta2008robustness}  is Nash's: 
\begin{definition}
    A  voting rule satisfies Nash's version of Independence of irrelevant alternatives (IIA) on a domain $D$, on a set of alternatives $A$, if  when $x\in A$ is ranked first by the rule for a profile and some set $x\notin B$  of alternatives is deleted, $x$ always remain ranked first.
\end{definition}
Simple majority satisfies IIA on all domains but both plurality and rank-order voting may not.

Dasgupta and Maskin introduced the following property, which we restate  in our terminology.
\begin{definition}
    A domain satisfies limited favoritism (LF) if  every triple  satisfies a never condition $qN1$ for some $q$ in the triple.
\end{definition}
They then showed: 
\begin{lemma}
    Both the plurality rule and runoff voting satisfies Nash's IIA on a domain $D$ if and only if $D$ satisfies LF.
\end{lemma}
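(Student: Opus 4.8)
The plan is to prove both directions by analyzing what can go wrong on a single triple, since both Nash's IIA and the LF condition are stated triple-by-triple. First I would prove the harder direction: if $D$ fails LF, then plurality (and runoff) fail Nash's IIA on $D$. Suppose some triple $(a,b,c)$ admits no never-top condition, i.e. each of $a,b,c$ appears first within the triple in some order of $D$. The idea is to build a profile, using only orders from $D$, in which the deletion of the ``losing'' set $B$ changes the plurality winner. Concretely, pick orders $\sigma_a,\sigma_b,\sigma_c\in D$ whose restrictions to $\{a,b,c\}$ have $a$, $b$, $c$ on top respectively. Construct a profile on the full alternative set by taking many copies of $\sigma_b$ and $\sigma_c$ and slightly fewer of $\sigma_a$, arranged so that when we restrict attention to $\{a,b,c\}$ — that is, delete $B = A\setminus\{a,b,c\}$ — the first-place votes split and $b$ (say) wins, while in the full profile some alternative $x\notin\{a,b,c\}$ (or $a$ itself) is the plurality winner. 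The delicate point is that deleting $B$ must promote each of $a,b,c$ to first place in its respective group of voters; this is exactly what the failure of the never-top condition guarantees. For runoff one argues similarly, observing that a three-way tie in first-place votes among $a,b,c$ after deletion forces a majority-based tiebreak which need not agree with the full-profile runoff.

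For the converse — if $D$ satisfies LF then plurality satisfies Nash's IIA — I would argue by contradiction. Suppose $x$ wins under plurality on some profile over the full set $A$, but after deleting a set $B$ with $x\notin B$, $x$ is no longer a plurality winner; let $y$ be a winner after deletion. Since plurality counts only first-place votes, the first-place count of $x$ among the voters does not decrease when we restrict to $A\setminus B$ (it can only increase, as $x$ is never removed), while $y$'s first-place count is at most its count in the restricted profile. The key is to descend to a triple: pick any third alternative $z$ in $A\setminus B$ distinct from $x,y$ (if $|A\setminus B|=2$ the claim is immediate from monotonicity of plurality under adding a dominated alternative), and examine which voters rank $x$, $y$, $z$ first within the triple $\{x,y,z\}$. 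The LF never-top condition on $\{x,y,z\}$ says one of them, say $q$, is never on top within that triple — hence the first-place share (within the triple) of that alternative is zero, which constrains how the votes can redistribute and forces $x$'s advantage in the full profile to be preserved in the restricted one. Summing these triple-wise constraints over all triples through $x$ pins down the redistribution enough to contradict $y$ beating $x$.

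The main obstacle I expect is the bookkeeping in the forward (LF $\Rightarrow$ IIA) direction: plurality's winner depends on the global distribution of first-place votes, not just on pairwise or triple-wise data, so one must show that the local never-top conditions on all triples through $x$ are jointly strong enough to control the global count after deleting an arbitrary subset $B$. The cleanest route is probably to reduce the deletion of a general set $B$ to a sequence of single-element deletions and to check that each single deletion preserves ``$x$ is a plurality winner'' — for one deleted alternative $b$, the only voters whose first choice changes are those ranking $b$ first, and each such voter's vote moves to their second choice; the never-top condition on triples $\{x, b, \cdot\}$ (or rather on the relevant triples) limits where those votes can go, in particular they cannot all pile onto a single competitor $y$ in a way that overtakes $x$. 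For runoff one then notes that runoff agrees with plurality except when no candidate has a first-place majority, and in that regime the argument reduces to simple majority, which satisfies IIA on every domain. I would phrase the whole proof so that it is visibly just a restatement of Dasgupta and Maskin's argument in the Fishburn never-condition language used here, citing \cite{dasgupta2008robustness} for the original version.
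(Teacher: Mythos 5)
First, note that the paper does not actually prove this lemma: it is introduced with ``They then showed:'' and is imported verbatim from Dasgupta and Maskin \cite{dasgupta2008robustness}, so there is no in-paper argument to compare yours against. Judged on its own terms, your sketch identifies the right two directions but has a genuine gap in each. In the necessity direction, your construction may produce a full-profile plurality winner $x\notin\{a,b,c\}$; but then $x$ lies in the deleted set $B=A\setminus\{a,b,c\}$, and Nash's IIA (which only constrains deletions $B$ with $x\notin B$) is vacuously satisfied by that deletion, so no violation has been exhibited. To make this direction work you must either arrange the profile so that the pre-deletion winner is itself one of $a,b,c$ and survives into the contracted agenda, or read the axiom as Dasgupta--Maskin do (as a consistency condition over all nested agendas) and run a two-step contraction such as $\{a,b,c\}\to\{a,b\}$, where the voters who topped $c$ transfer to $a$ and overturn $b$'s plurality. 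Your sketch does neither explicitly, and the runoff case (where a two-candidate contest is decided by majority, which never violates IIA) needs its own example rather than the ``three-way tie'' remark.

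In the sufficiency direction, the ``triple-wise bookkeeping'' and the claim that transferred votes ``cannot all pile onto a single competitor'' do not constitute an argument, and you correctly flag this as the obstacle --- but the resolution is a structural fact you never invoke. LF makes $D$ (and every restriction of $D$ to a sub-agenda, since never-top conditions on triples are inherited) an Arrow's single-dipped domain, which by the dual of Raynaud's result has at most \emph{two} alternatives that are ever ranked first. Hence on any agenda all first-place votes are split between at most two candidates, so the plurality winner $w$ holds a strict majority of first-place votes; deleting alternatives other than $w$ can only increase $w$'s first-place count, so $w$ retains an absolute majority on every sub-agenda and wins both plurality and runoff there. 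This absolute-majority observation is the missing idea that closes the direction cleanly; without it, your proposed single-deletion induction does not control where the transferred votes go.
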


Here now observe that the LF property is in fact a dual version of the definition of Arrow's single-peaked domains
\begin{proposition}
    A domain which satisfies LF  is an Arrow's single-dipped domain.
\end{proposition}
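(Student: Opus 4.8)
The plan is to reduce everything to unwinding the Fishburn notation and applying duality. The key observation is that the never condition $qN1$ on a triple $\{a,b,c\}$ says exactly that the $q$-th smallest alternative of the triple is never ranked first within the triple by any order of the domain; consequently ``$D$ satisfies LF'' is literally the statement ``every triple restriction of $D$ satisfies a never-top condition''. So there is nothing to prove about the triples in isolation — the content is to identify this with being the dual of an Arrow's single-peaked domain.

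To do that, I would pass to the dual domain $D^{*}$ obtained by reversing every order of $D$. Reversing all orders sends rank $1$ to rank $3$ (and rank $3$ to rank $1$) within every triple, so a never-top condition on a triple of $D$ becomes a never-bottom condition on the same triple of $D^{*}$, while the labelling of the alternatives — hence the ``global ascending order'' used to index the Fishburn conditions — is untouched. Therefore, if $D$ satisfies LF, then every triple of $D^{*}$ satisfies a never-bottom condition, i.e.\ a condition of the form $xN3$. By the characterisation of Arrow's single-peaked domains recalled among the definitions above, $D^{*}$ is an Arrow's single-peaked domain, and hence $D=(D^{*})^{*}$ is by definition an Arrow's single-dipped domain.

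The only points that need a sentence of care, rather than any real work, are: (i) when $D$ is not unitary one should phrase the argument directly in terms of never-top/never-bottom conditions, which are independent of any choice of axis, rather than in the Fishburn $iNj$ form; and (ii) to be certain that $D^{*}$ is genuinely an Arrow's single-peaked domain one may invoke the converse direction of Sen's theorem — a domain all of whose triple restrictions satisfy some never condition is a Condorcet domain — so that $D^{*}$ is a Condorcet domain every triple of which is never-bottom, exactly the class studied by Inada and Slinko. I do not anticipate any genuine obstacle: the statement is essentially the remark that the defining ``never last'' condition of Arrow's single-peaked domains dualises to the ``never first'', i.e.\ limited favoritism, condition.
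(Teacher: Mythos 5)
Your proof is correct and follows essentially the same route as the paper's: observe that LF means every triple satisfies a never-top ($qN1$) condition, pass to the dual so that these become never-bottom ($qN3$) conditions, and conclude that the dual is an Arrow's single-peaked domain, hence the original is Arrow's single-dipped by definition. The additional remarks about non-unitary domains and Sen's theorem are harmless but unnecessary, since the paper defines Arrow's single-peaked domains directly by the never-bottom condition on every triple.
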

\begin{proof}
    Note that a domain which satisfies LF is a domain on which every triple satisfies a never condition of the form $qN1$. Taking the dual this gives a domain on which every triple has a never condition of the form $qN3$, which is the definition of an Arrow's single-peaked domain.
\end{proof}

As a consequence we have also provided a full enumeration of the domains on which either of the plurality rule and runoff voting satisfies Nash's version of IIA. 

Using this result we also find that the set of possible winners in a well-behaved domain for these voting rules is extremely sparse.
\begin{corollary}
    A domain which satisfies LF has at most two alternatives which are ranked first by some order in the domain.
\end{corollary}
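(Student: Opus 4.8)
The plan is to give a direct one-triple argument rather than routing through the dual. Suppose, for contradiction, that $D$ satisfies LF and that there are three distinct alternatives $a,b,c$, each of which is ranked first by some order of $D$. I would then look at the restriction of $D$ to the triple $\{a,b,c\}$. By LF this restriction satisfies a never condition $qN1$ for some $q\in\{a,b,c\}$; without loss of generality $q=a$, so no order of $D$ restricts to an order on $\{a,b,c\}$ in which $a$ is first. But by assumption there is an order $\sigma\in D$ with $a$ in rank $1$, and since $a$ then precedes every other alternative in $\sigma$, in particular it precedes $b$ and $c$; hence the restriction of $\sigma$ to $\{a,b,c\}$ ranks $a$ first, contradicting $qN1$. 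Therefore at most two alternatives can be ranked first by an order in $D$.

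I would also record the slightly more structural derivation, since it ties the corollary back to the earlier sections: by the preceding proposition $D$ is an Arrow's single-dipped domain, so its dual $D^{*}$ is an Arrow's single-peaked domain; by \cite{SLINKO2019166} (equivalently, by Lemma~\ref{lem:pair} applied with $i$ taken to be a terminal alternative) $D^{*}$ has at most two alternatives that can be ranked last, and reversing all orders returns the claim for $D$. The only point that needs care in this second route is the bookkeeping of "top" versus "bottom" never conditions under reversal of orders, namely that $qN1$ for $D$ on a triple corresponds to $qN3$ for $D^{*}$ on the same triple.

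There is essentially no serious obstacle here: the heart of the matter is the elementary observation that an alternative ranked first globally is ranked first inside every triple containing it, so the never-top condition guaranteed by LF on $\{a,b,c\}$ immediately forbids a third first-ranked alternative. The only things worth a line of care are the degenerate cases $n\le 2$, where the statement is vacuous or trivial, and noting that the argument never uses maximality of $D$, since both LF and the never condition $qN1$ are defined for arbitrary domains.
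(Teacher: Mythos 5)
Your proposal is correct, and your primary argument takes a genuinely different route from the paper's. The paper derives the corollary by composing the preceding proposition (LF domains are Arrow's single-dipped, i.e.\ duals of Arrow's single-peaked domains) with Raynaud's result that an Arrow's single-peaked domain has at most two alternatives ranked last, then dualising. Your main argument instead works directly on a single triple: if $a$, $b$, $c$ could each be ranked first by some order, then each would be ranked first in the restriction of the corresponding order to $\{a,b,c\}$, so no never condition $qN1$ could hold on that triple, contradicting LF. This is more elementary and self-contained --- it uses only the definition of LF and the trivial fact that a globally top-ranked alternative is top-ranked in every triple containing it, and it needs neither the duality proposition nor the imported result on terminal alternatives. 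What the paper's route buys is the explicit tie to the structure theory of Arrow's single-peaked domains, which is the thematic point of that section; your secondary derivation reproduces exactly that route, including the correct bookkeeping that $qN1$ for $D$ corresponds to $qN3$ for the dual. You are also right that maximality of $D$ is nowhere needed.
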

This follows directly from the proposition and Raynaud's \cite{raynaud1981paradoxical} observation that an Arrow's single-peaked domain has at most two alternatives ranked last. 

Hence, both plurality and runoff voting have well-behaved domains which are as large as the classical single-peaked domains, but unlike these, the set of possible winners for these rules is as small as it can possibly be in a non-dictatorial domain. 

\subsection{Two versions of Borda domains}
Domains on which the Borda count is well behaved have been studied under several different sets of axioms.  In \cite{dasgupta2008robustness} a structural condition for domains on which the Borda count satisfies Nash's version  of IIA is derived.   In \cite{barbie2006non}  the authors instead consider a combination of Arrow's original version IIA with the Pareto principle for non-dictatorial domains.  

Our next aim is to expand on these results by identifying the largest domains for the Borda rule under each of the two versions of IIA. For Nash's IIA this leads to a family of non-maximal Arrow's single-peaked domains. For Arrow's IIA we identify the subclass of domains found in \cite{barbie2006non} which have maximum size.

Dasgupta and Maskin defined the following property, which we give a more concise definition of:
\begin{definition}
    A domain $D$ satisfies Quasi-agreement (QA)  if for any triple of alternatives $(i,j,k)$ the restriction of $D$ to this triple has a fixed point.
\end{definition}

In \cite{dasgupta2008robustness} they then showed that QA is equivalent to Nash's IIA for the Borda count.
\begin{proposition}
    Rank-order voting satisfies Nash's IIA  on a domain $D$ if and only if $D$ satisfies QA.     
\end{proposition}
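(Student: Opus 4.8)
The plan is to prove both implications directly from the definition of the Borda count and Nash's IIA, using the triple-wise structure that QA provides.

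First I would prove that QA implies Nash's IIA. Suppose $D$ satisfies QA and let $\mathbf{R}$ be a profile on $D$ for which the Borda winner is some alternative $x$. Suppose we delete a set $B$ with $x \notin B$; I must show $x$ remains the Borda winner on the restricted profile. The key observation is that the Borda score of $x$ equals $\sum_{y \neq x}(\text{number of voters preferring } x \text{ to } y)$ up to an affine normalisation — that is, the Borda comparison between $x$ and any other surviving alternative $z$ can be decomposed into pairwise contributions over triples $(x,z,w)$. Concretely, deleting $B$ changes the Borda score difference between $x$ and a surviving alternative $z$ only through the alternatives in $B$; for each $w \in B$, the net effect on the $x$-versus-$z$ comparison is governed by the restriction of $D$ to the triple $(x,z,w)$. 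By QA that restriction has a fixed point, so one of $x$, $z$, $w$ occupies the same rank within the triple in every order of $D$. I would then check, case by case on which element is the fixed point, that removing $w$ shifts the Borda scores of $x$ and $z$ by the \emph{same} amount (when the fixed point is $w$, or when it is $x$ or $z$ and sits above or below both), so the sign of the difference $\mathrm{Borda}(x) - \mathrm{Borda}(z)$ is preserved. Summing over all $w \in B$, the relative order of $x$ and every surviving $z$ is unchanged, hence $x$ stays the winner.

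For the converse, I would argue the contrapositive: if $D$ fails QA, there is a triple $(i,j,k)$ whose restriction to $D$ has no fixed point, and I must build a profile and a deletion witnessing a failure of Nash's IIA. Since the restriction has no fixed point, for each of $i$, $j$, $k$ there is an order in $D$ placing it in a different position within the triple than some other order does; with a small amount of bookkeeping this yields enough orders to engineer a profile (by choosing appropriate multiplicities of these orders, and padding with orders so the remaining alternatives are irrelevant) in which some alternative from the triple — say $i$ — is the Borda winner overall but loses to $j$ once $k$ (or some other alternative) is deleted. This is the standard Borda IIA-violation construction localised to the bad triple.

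The main obstacle will be the case analysis in the forward direction: verifying that the pairwise contribution of each deleted alternative $w$ to the $x$-versus-$z$ Borda comparison is \emph{neutral} precisely when the triple $(x,z,w)$ has a fixed point, and correctly handling the subcases according to whether the fixed point is the candidate $x$, the competitor $z$, or the deleted alternative $w$, and in the first two subcases whether the fixed rank lies above or below the others. I expect the cleanest route is to express, for a single voter with order $R$, the quantity $(\text{rank of } z \text{ in } R) - (\text{rank of } x \text{ in } R)$ before and after deleting $w$, note the difference is $0$ unless $w$ lies strictly between $x$ and $z$ in $R$, and then show that the presence of a fixed point in the triple forces this ``betweenness'' contribution to be the same for all voters in a way that cancels in the aggregate. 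The rest — additivity of Borda scores over pairs, and the reduction of the converse to a single triple — is routine.
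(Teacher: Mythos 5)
The paper does not actually prove this proposition---it is quoted from Dasgupta and Maskin (2008)---so your attempt has to stand on its own, and as written the forward direction contains a genuine gap. Your key claim is that, for each deleted alternative $w$ and each surviving competitor $z$, the fixed point of the triple $\{x,z,w\}$ forces the deletion of $w$ to shift the Borda scores of $x$ and $z$ by the same amount, the ``betweenness'' contributions cancelling in the aggregate. This is false. Write $m(a,b)$ for the number of voters preferring $a$ to $b$; deleting $w$ lowers the score of $x$ by $m(w,x)$ and that of $z$ by $m(w,z)$. If the fixed point of the triple is $w$ sitting in the \emph{middle}, then every voter has either $x\succ w\succ z$ or $z\succ w\succ x$, so $m(w,x)=m(z,x)$ and $m(w,z)=m(x,z)$, which differ whenever the $x$--$z$ majority is not tied; the per-voter betweenness contributions are $+1$ for some voters and $-1$ for others and do not cancel. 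Similarly, if the fixed point is $x$ at the top of the triple, the shift of $x$ is $0$ while that of $z$ is $m(w,z)$, which need not vanish. So sign-preservation of $\mathrm{Borda}(z)-\mathrm{Borda}(x)$ cannot be obtained term-by-term as a ``no net change'' statement.

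The repair is a slightly stronger structural lemma. Using $\mathrm{Score}(z)-\mathrm{Score}(x)=\bigl[m(x,z)-m(z,x)\bigr]+\sum_{a\neq x,z}\bigl[m(a,z)-m(a,x)\bigr]$, check case by case on the fixed point of the triple $\{x,z,a\}$ that each summand $m(a,z)-m(a,x)$ is either $0$ (fixed point $a$ at top or bottom, or $x$ or $z$ in the middle) or equal to $m(x,z)-m(z,x)$ (fixed point $a$ in the middle), or weakly of the same sign as $m(x,z)-m(z,x)$ (fixed point $x$ or $z$ at top or bottom). Hence under QA the Borda comparison of any pair coincides with their pairwise majority comparison, which is manifestly unaffected by deleting other alternatives; this gives Nash's IIA (and shows the Borda winner is in fact a Condorcet winner). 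Your converse sketch is directionally right---a fixed-point-free triple such as $\{abc,\,bca\}$ does admit a profile (e.g.\ $3$ voters $abc$, $2$ voters $bca$, where $b$ wins but $a$ wins after $c$ is deleted) witnessing a violation---but the lifting from the bad triple to the full alternative set needs care, since you must arrange for the overall Borda winner of the padded profile to lie in the triple before any deletion argument applies.
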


We will now proceed to find the structure of the set of highest ranked alternatives in a QA domain, and the describe the QA domains of maximum size.
\begin{lemma}
    A QA-domain can have at most 2 alternatives ranked first.
\end{lemma}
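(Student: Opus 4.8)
The plan is to argue by contradiction: suppose three distinct alternatives $a$, $b$, $c$ are each ranked first by some order in a QA-domain $D$. Consider the restriction of $D$ to the triple $(a,b,c)$. By the QA property this restriction has a fixed point, say $a$ is ranked in the same position by every order of $D\restriction\{a,b,c\}$. But since $a$ is ranked first by some order of $D$, and $a$'s position within the triple can only improve (move upward) when we delete elements outside the triple, $a$ must be ranked first within the triple by that order; hence the fixed position of $a$ in the triple is rank 1. The same reasoning applied to $b$ shows $b$ would have to be the fixed point at rank 1, and likewise for $c$; but the triple can have only one fixed point and it occupies only one rank, so we cannot simultaneously have $a$, $b$, and $c$ all fixed at rank 1. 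This is the contradiction.

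More carefully, the key step is the observation that a fixed point of $D\restriction\{a,b,c\}$ at rank $p\in\{1,2,3\}$ forces every order of the full domain $D$ to put exactly $p-1$ of the remaining two triple-members above it and $3-p$ below it. If $a$ is to be ranked first somewhere in $D$, then in that order nothing is above $a$, so in particular neither $b$ nor $c$ is above $a$, forcing $p=1$ for $a$. Thus "being rankable first in $D$" together with "being a fixed point of the triple restriction" pins the fixed point at rank 1. Since at most one of $a,b,c$ can be the fixed point of the triple, at most one of them can be ranked first in $D$ — wait, that already gives the stronger (and false) conclusion that at most one alternative is ranked first, so I must be more careful: the fixed point of the triple need not itself be rankable first in $D$. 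The correct deduction is only that if $a$ is rankable first in $D$ then $a$ lies weakly above $b$ and $c$ in every order of the triple restriction, i.e. $a$ is never below $b$ and never below $c$ in the triple.

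So the real argument is: if $a$, $b$, $c$ are all rankable first in $D$, then within the triple restriction $a$ is never below $b$ or $c$, $b$ is never below $a$ or $c$, and $c$ is never below $a$ or $b$; but then no order of $D\restriction\{a,b,c\}$ can have any of the three at the bottom, so the triple restriction contains no order at all, contradicting that $D$ is nonempty (or contradicting QA, since an empty restriction trivially has no meaningful fixed point, but more directly: any single order on three elements must put someone last). Hence at most two alternatives are ranked first. The main obstacle is stating precisely the monotonicity fact — deleting alternatives outside a set cannot push a retained alternative downward in the order — and using it to convert "ranked first in $D$" into the clean "never ranked below $b$ or $c$ within the triple" condition; once that is in hand the counting is immediate.
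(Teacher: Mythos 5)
Your setup is right---restrict to the triple, use the fact that deleting alternatives outside $\{a,b,c\}$ cannot push a retained alternative downward, and play this off against the QA fixed point---but the step you lean on in your final paragraph is false. You claim that if $a$ is rankable first in $D$ (a QA-domain) then $a$ is never below $b$ and never below $c$ in the restriction of $D$ to $\{a,b,c\}$. Counterexample: let that restriction be $\{abc,\ cba\}$, which satisfies QA with $b$ as a fixed point at rank $2$; here $a$ is ranked first in one order yet ranked last (below both $b$ and $c$) in the other. So ``rankable first'' does not convert into a never-bottom condition on the triple, and your concluding count (``none of the three can be at the bottom, so the restriction is empty'') has no valid premise. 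Note also that this restriction has \emph{two} alternatives rankable first, a situation the lemma must permit (its bound of two is attained, e.g.\ by the maximum QA-domains in the paper), whereas your claimed implication would wrongly exclude it.

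The repair is shorter than your detour and is essentially the paper's one-line argument. If $a$, $b$, $c$ are each ranked first by some order of $D$, then by your monotonicity observation the restriction of $D$ to $\{a,b,c\}$ contains an order with $a$ first, one with $b$ first, and one with $c$ first. Consequently no alternative of the triple has a constant rank: each of $a$, $b$, $c$ has rank $1$ in one of these three orders and rank at least $2$ in another. Hence the triple has no fixed point, contradicting QA. You had all the ingredients for this in your ``more careful'' paragraph before veering into the never-bottom claim.
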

\begin{proof}
    If there are three alternatives $(a,b,c)$ which can be ranked first then that triple of alternatives does not satisfy QA.
\end{proof}
This shows that, just like the plurality  and runoff rules,  the Borda count either fails to satisfy NAsh's IIA or has a very limited set of possible winners.

The following two lemmas are easy to prove so we leave their proofs out.
\begin{lemma}
    If 1 and 3 are ranked first in a unitary QA-domain then 2 is a fixed point.
    
    If 1 and $a>3$ are ranked fist in a unitary QA-domain $D$ then any order in $D$ restricted to the interval between 1 and $a$ is either the standard order on 1 to $a$ or its reverse.
\end{lemma}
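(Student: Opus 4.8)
The plan is to prove each of the two statements by identifying, for a handful of carefully chosen triples, which alternative must be the QA fixed point, exploiting that a unitary domain contains the identity order $e = 1 2 \ldots n$ together with the hypothesis that a second named alternative is ranked first by some order $\sigma \in D$. The only ``moves'' are: the fixed point of a triple $T$ must be an alternative whose rank inside $T$ is the same in $e$ and in $\sigma$, and an alternative that is first overall in $\sigma$ is first inside every triple through it.

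For the first statement, let $\sigma \in D$ rank $3$ first. In the triple $\{1,2,3\}$, alternative $1$ is top and $3$ is bottom in $e$, while $3$ is top in $\sigma$; hence neither $1$ nor $3$ can be the fixed point, so $2$ is, and it keeps its $e$-rank, the middle, so in every order of $D$ alternative $2$ lies strictly between $1$ and $3$. Next, for each $x \geq 4$ I would look at the triples $\{1,3,x\}$ and $\{2,3,x\}$: in each of these $3$ is top in $\sigma$ but only middle in $e$, and the remaining small alternative is top in $e$ but not in $\sigma$, so $x$ must be the fixed point, at its $e$-rank, the bottom. Thus every $x \geq 4$ is ranked below all of $1,2,3$ in every order, so ranks $1,2,3$ are always occupied by $\{1,2,3\}$; combined with the first observation this pins $2$ to rank $2$ in every order, i.e.\ $2$ is a fixed point of $D$.

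For the second statement I would first reduce to the case where the alternative set is exactly the interval $\{1,\ldots,a\}$, since $D' = D|_{\{1,\ldots,a\}}$ is again a unitary QA-domain with $1$ ranked first (by $e$) and $a$ ranked first (by the restriction of the order ranking $a$ first). Renaming $a$ as $m \geq 4$ and taking $\sigma \in D'$ with $m$ first, the argument is two-stage. Stage one: for $2 \leq j \leq m-1$, the triple $\{1,j,m\}$ has $m$ top in $\sigma$ but bottom in $e$, and $1$ top in $e$ but not in $\sigma$, so $j$ is the fixed point at its $e$-rank, the middle; hence every order of $D'$ has all of $2,\ldots,m-1$ between $1$ and $m$, which forces each order to start with $1$ and end with $m$, or start with $m$ and end with $1$ — in particular $\sigma$ ends with $1$. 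Stage two: for $2 \leq i < j \leq m-1$, in the triple $\{1,i,j\}$ alternative $1$ is now bottom in $\sigma$ (which ends with $1$) but top in $e$, and $j$, being bottom in $e$, cannot also be bottom in $\sigma$, so $i$ is the fixed point at its $e$-rank, the middle. Finally, for an arbitrary $\tau \in D'$: if $\tau$ starts with $1$ then for every $i < j$ in $\{2,\ldots,m-1\}$ alternative $1$ precedes both while $i$ is between them, forcing $i$ before $j$, so $\tau = 1 2 \ldots m$; if $\tau$ starts with $m$ the same triple forces $\tau = m (m-1) \ldots 1$. Undoing the reduction gives the claim.

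I expect the only real obstacle to be getting the order of the argument in the second statement right: one cannot analyse the triples $\{1,i,j\}$ straight away, because before stage one there is no second order whose restriction to $\{1,i,j\}$ is known, and hence no way to locate the fixed point there; the bootstrapping — first use the triples through $m$ to push $1$ and $m$ to the two ends, then use that to order everything in between — is the point of the proof. A minor technicality is that the ``starts with $1$''/``starts with $m$'' dichotomy is exhaustive, which already follows from stage one (or, alternatively, from the earlier lemma that a QA-domain has at most two alternatives ranked first).
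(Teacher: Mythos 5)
Your proof is correct. The paper itself gives no argument for this lemma (it is dismissed as ``easy to prove''), so there is nothing to compare against, but your two-stage bootstrapping --- first using the triples $\{1,j,a\}$ together with the unitary order and the order ranking $a$ first to pin every intermediate $j$ between $1$ and $a$, hence forcing $1$ and $a$ to the two ends, and only then using the triples $\{1,i,j\}$ to sort the interior --- is exactly the argument one would expect the authors to have in mind, and your handling of the one genuine subtlety (that the triples $\{1,i,j\}$ cannot be analysed before a second restricted order is known) is right.
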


\begin{lemma}
    If $b$ is a fixed point of a QA-domain $D_1$ then $D_2=D_1\setminus b$ is a QA-domain. The domain $D_2$ has the same size as $D_1$ and one fixed point fewer than $D_1$.

    If $D_1$ is a  QA-domain on $n-1$  alternatives and $D_1$ is obtained by adding a new alternative $n$ as a fixed point with rank $i$, for any $i$, then $D_2$ is also a QA-domain. 
\end{lemma}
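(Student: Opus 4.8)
The plan is to prove both halves of the lemma by a direct appeal to the QA condition restricted to each triple, using the fact that a fixed point in a triple is preserved when we either forget an alternative or adjoin one as a fixed point.

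For the first statement, suppose $b$ is a fixed point of $D_1$ and set $D_2 = D_1 \setminus b$. Every order in $D_2$ arises from a unique order in $D_1$ by deleting $b$, and since $b$ occupies the same rank in every order of $D_1$, distinct orders of $D_1$ restrict to distinct orders of $D_2$; hence $|D_2| = |D_1|$. Every fixed point $c \neq b$ of $D_1$ remains a fixed point of $D_2$ (its rank drops by one uniformly if $c$ was ranked below $b$, and is unchanged otherwise), while $b$ itself is no longer present, so $D_2$ has exactly one fewer fixed point than $D_1$. For the QA property: take any triple $(i,j,k)$ of alternatives in $D_2$. These are also alternatives in $D_1$, so the restriction of $D_1$ to $\{i,j,k\}$ has a fixed point, say $x$. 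Since $b \notin \{i,j,k\}$, restricting further by deleting $b$ does not change the orders on $\{i,j,k\}$ at all — so $x$ is still a fixed point of the restriction of $D_2$ to that triple. Thus $D_2$ satisfies QA.

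For the second statement, let $D_1$ be a QA-domain on $\{1,\dots,n-1\}$ and let $D_2$ be obtained by inserting a new alternative $n$ at a fixed rank $i$ in every order of $D_1$ (shifting the alternatives originally at ranks $\geq i$ down by one). We must check that every triple of $D_2$ has a fixed point in the restriction. If the triple avoids $n$, its restriction in $D_2$ is order-isomorphic to its restriction in $D_1$ — inserting $n$ uniformly cannot change the relative order of the other three alternatives — so the fixed point from $D_1$ carries over. If the triple is $(a,b,n)$ for some $a,b \in \{1,\dots,n-1\}$: in every order of $D_2$ the relative position of $n$ with respect to $a$ is the same, since $n$ sits at a fixed absolute rank and $a$'s position relative to that rank is whatever it was (shifted uniformly), so $n$ occupies a fixed position within the pair $\{a,n\}$ in all orders; the same holds for $\{b,n\}$; consequently $n$ occupies a fixed position (top, middle, or bottom) within the triple $\{a,b,n\}$ across all orders of $D_2$. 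Hence $n$ is the required fixed point for that triple, and $D_2$ is a QA-domain.

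I do not expect a genuine obstacle here; the only point that needs a little care is verifying in the second part that $n$, inserted at a fixed \emph{absolute} rank, really does land in a fixed \emph{relative} position within every triple containing it — this follows because deleting all but three alternatives from a fixed linear-order-with-a-pinned-element still pins that element's rank among the survivors. One should also double-check the edge cases $i=1$ and $i=n$ (where $n$ becomes a top or bottom fixed point of $D_2$), but these are immediate. The size and fixed-point bookkeeping in the first part is routine once one observes that deleting a uniformly-ranked element is injective on orders.
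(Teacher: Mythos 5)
Your proof of the first statement is fine (one tiny omission: to get ``exactly one fewer fixed point'' you should also note that deleting $b$ cannot \emph{create} a fixed point, which follows because two distinct ranks of an alternative remain distinct after removing the uniformly ranked $b$). The problem is the second statement, and the flaw sits exactly at the spot you flagged as ``the only point that needs a little care.'' Your claim that $n$, pinned at a fixed absolute rank $i$, is pinned at a fixed relative rank within every triple $\{a,b,n\}$ is false: $n$'s rank within the triple is $1$ plus the number of elements of $\{a,b\}$ ranked above $n$, and that count varies whenever $a$ or $b$ crosses rank $i$ between orders. Nothing in the QA hypothesis prevents this. Concretely, take $D_1=\{1\succ 2\succ 3,\; 3\succ 2\succ 1\}$, which is QA (alternative $2$ is a fixed point of its only triple), and insert $4$ at rank $2$ to get $D_2=\{1\succ 4\succ 2\succ 3,\; 3\succ 4\succ 2\succ 1\}$. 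The restriction to $\{1,2,4\}$ consists of $1\succ 4\succ 2$ and $4\succ 2\succ 1$, which has no fixed point. So not only does your argument break down --- the second statement of the lemma is itself false as literally written (for arbitrary $i$); it does hold for $i=1$ and $i=n$, and for the specific insertions into $\mathcal{B}^1_{n-1}$ used later in the paper, but there some \emph{other} alternative of the triple serves as the fixed point, not $n$.

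The paper gives no proof to compare against (it declares both lemmas ``easy to prove''), so you are not missing a trick from the authors; rather, any correct treatment must either add a hypothesis (e.g., that every other alternative's set of possible ranks in $D_1$ lies entirely above or entirely below the insertion point, or that $D_1$ has enough fixed structure that each triple $\{a,b,n\}$ already has a fixed point among $\{a,b\}$) or restrict the claim to the cases actually needed in the subsequent theorem. As written, your argument proves a false statement, so the step ``consequently $n$ occupies a fixed position within the triple'' is a genuine gap, not a routine verification.
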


For even $n$ let  $\mathcal{B}^1_n$ be the domain in which alternatives $2i-1$ and $2i$ are given either of rank $2i$ and $2i-1$ in very order. This domain has size $2^{n/2}$ and has no fixed points.
\begin{theorem}
    For even $n$ the unique unitary, maximum QA-domain is $\mathcal{B}^1_n$. 

    For odd $n$ every maximum QA-domain has at least one fixed point and there are  $n$ maximum unitary  QA-domains with exactly one fixed point. These domains have size $2^{(n-1)/2}$.
\end{theorem}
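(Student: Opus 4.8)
By the lemma on fixed points, deleting a fixed point $b$ from a QA-domain yields a QA-domain with the same number of orders and exactly one fewer fixed point. Iterating, every QA-domain $D$ on $n$ alternatives has the same number of orders as some \emph{fixed-point-free} QA-domain $D^{\ast}$ on $n-f$ alternatives, where $f\ge 0$ is the number of fixed points of $D$ (and $n-f$ is then necessarily even, possibly $0$). Hence the whole theorem will follow by bookkeeping from the following claim, which I would prove by strong induction on $m$: \emph{there is no fixed-point-free QA-domain on an odd number $m$ of alternatives, and for even $m$ every fixed-point-free QA-domain on $m$ alternatives has at most $2^{m/2}$ orders, the only unitary one of size exactly $2^{m/2}$ being $\mathcal{B}^1_m$} (for $m=0$, the one empty order).

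\textbf{The induction.} The cases $m\le 3$ are direct: on $1$ or $3$ alternatives every QA-domain has a fixed point (on $3$ alternatives the domain is a single triple-restriction, which must have one), and on $2$ alternatives the only fixed-point-free domain is $\{12,21\}=\mathcal{B}^1_2$. Now let $D$ be a unitary fixed-point-free QA-domain on $m\ge 4$ alternatives (relabelling to make $D$ unitary preserves both properties). A unitary domain in which only one alternative is ranked first has that alternative as a fixed point, and a QA-domain has at most two alternatives ranked first; so exactly two alternatives, say $1$ and $a$, are ranked first, and by the lemma on first-ranked alternatives $a=3$ is impossible (it would make $2$ a fixed point), leaving $a=2$ or $a\ge 4$. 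The crux is to find a set $S$ with $|S|=a$ which in every order of $D$ occupies the first $a$ positions and with $D|_S$ having just two orders. For $a=2$ take $S=\{1,2\}$: since $2$ is ranked first, some order puts $2$ above $1$, while $12\cdots n$ puts $1$ above $2$ and both above any $k\notin S$, so neither $1$ nor $2$ is the fixed point of $D|_{\{1,2,k\}}$; hence $k$ is, and as $12\cdots n$ ranks $k$ last there, $k$ lies below both $1$ and $2$ in every order. For $a\ge 4$ take $S=\{1,\dots,a\}$: by the lemma on first-ranked alternatives $D|_S$ consists only of the increasing order $12\cdots a$ and its reverse; choosing $\rho\in D$ with $a$ ranked first, so $\rho|_S$ is decreasing, a triple $\{i,j,z\}$ with $i<j$ in $S$ and $z\notin S$ has no fixed point whenever $z$ is strictly between $i$ and $j$ in $\rho$, forcing every $z\notin S$ to sit after all of $S$ in $\rho$; that triple then has $z$ last, so $z$ lies below both $i$ and $j$ in every order of $D$, and since this holds for all such $i,j$ the block $S$ fills the first $a$ positions.

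\textbf{Closing the induction.} Given such an $S$, the map $\sigma\mapsto(\sigma|_S,\sigma|_{\overline{S}})$ embeds $D$ into $(D|_S)\times(D|_{\overline{S}})$, so $D$ has at most $2\bigl|D|_{\overline{S}}\bigr|$ orders; moreover $D|_{\overline{S}}$ is a fixed-point-free QA-domain on $m-a$ alternatives (a fixed point of it would be a fixed point of $D$, as $S$ always fills the same initial block), and $a$ must be even (an odd $a$ makes the central element of $S$ a fixed point of $D$). By induction applied to $D|_{\overline{S}}$, $m-a$ cannot be odd, so it is even, hence $m$ is even and $\bigl|D|_{\overline{S}}\bigr|\le 2^{(m-a)/2}$. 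For $a\ge 4$ this gives $|D|\le 2^{(m-a+2)/2}<2^{m/2}$; for $a=2$ it gives $|D|\le 2^{m/2}$, with equality forcing $D$ to be the full product $(D|_{\{1,2\}})\times(D|_{\overline{S}})$ and $D|_{\overline{S}}$ to be a maximum unitary fixed-point-free QA-domain on $m-2$ alternatives, i.e.\ $\mathcal{B}^1$ on $\{3,\dots,m\}$ by induction, whence $D=\mathcal{B}^1_m$. In particular, for odd $m$ both cases are impossible, so no fixed-point-free QA-domain on an odd number of alternatives exists, completing the induction.

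\textbf{Deducing the theorem, and the main obstacle.} By the claim and the reduction, a QA-domain on $n$ alternatives has at most $\max\{2^{m/2}:m\le n,\ m\text{ even}\}=2^{\lfloor n/2\rfloor}$ orders, attained by $\mathcal{B}^1_n$ when $n$ is even. If $n$ is even and $D$ attains the maximum, its fixed-point-free reduct has $m$ alternatives with $2^{m/2}=2^{n/2}$, so $m=n$: $D$ is itself fixed-point-free, hence $D=\mathcal{B}^1_n$ when unitary. If $n$ is odd then no QA-domain on $n$ alternatives is fixed-point-free, so every maximum one has a fixed point; and if $D$ is maximum its reduct has $m=n-1$, so $D$ has exactly one fixed point $b$. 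Being unitary, $b$ has rank $b$, and $D\setminus b$ is a maximum unitary fixed-point-free QA-domain on $[n]\setminus\{b\}$, hence $\mathcal{B}^1$ there, so $D$ is obtained from it by inserting $b$ at rank $b$ and is determined by $b$. Conversely, for each of the $n$ choices of $b$, inserting $b$ at rank $b$ into $\mathcal{B}^1$ on $[n]\setminus\{b\}$ yields, by the converse half of the fixed-point lemma and the fact that inserting at a fixed rank creates no new fixed point, a maximum unitary QA-domain whose only fixed point is $b$; these are pairwise distinct, giving exactly $n$ such domains, each of size $2^{(n-1)/2}$. The main obstacle is the $a\ge 4$ case in the induction --- proving that $S=\{1,\dots,a\}$ fills the first $a$ positions --- and, relatedly, threading the odd-$m$ non-existence statement through the induction alongside the even-$m$ size and uniqueness statements; the remaining steps are routine.
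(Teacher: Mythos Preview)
Your proof is correct and follows the same inductive strategy as the paper: split on whether the domain has a fixed point, and in the fixed-point-free case peel off the initial block determined by the two first-ranked alternatives $1$ and $a$, bounding $|D|$ by twice the size of the restriction to the complement. Your organisation (first reduce to fixed-point-free domains, then induct on those) is a mild repackaging of the paper's direct induction, but you are considerably more thorough: the paper asserts without argument that deleting $\{1,\ldots,a\}$ costs at most a factor of two (implicitly using that $S$ occupies an initial segment, which you actually prove), and its proof does not address the uniqueness of $\mathcal{B}^1_n$ for even $n$ or the exact count of $n$ domains for odd $n$, all of which you supply.
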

\begin{proof}
    The theorem is easily seen to be true for $n=3,4$. Likewise it is easy to see that domains of the stated sizes exist, either by taking $\mathcal{B}^1_n$ for even $n$, or by inserting a fixed point into $\mathcal{B}^1_{n-1}$ for odd $n$. 
    
    Let us assume that the theorem is true for less than $n$ alternatives and that $D$ is a maximum QA domain. 

    Assume that $D$ has a fixed point $p$. Then the size of $D$ is equal to that of the domain $D'$ obtained by deleting $q$. Since $D'$ has less than $n$ alternatives the theorem holds for $D'$.  If $n$ is odd this shows that $D$ has the size stated in the theorem. If $n$ is even we get that $D$ has size at most $2^{(n-2)/2}$. Hence  domain with a fixed point cannot be a maximum domain for even $n$.

    We can now assume that $D$ has two alternatives, 1 and $a$, which are ranked first. If  $a=2$ then we can delete 1 and 2 from each order an get a QA domain $D'$ for $n-2$ alternatives, which by induction satisfies our bound. Hence $D$ has size at most twice that of $D'$ and we are done.   If $a>2$ then we can delete $1,2\ldots,a$ from all orders and get a QA domain $D'$ for $n-a$ alternatives.  Our domain then has size at most twice that of $D'$, which is less than our bound if $n$ is even, or if $n$ is odd and $a>3$, and at most equal to our bound if $n$ is odd and $a=3$. 
\end{proof}

Here we  see that the maximum size of a QA domain is $2^{n/2}$, for even $n$,  while Arrow's single-peaked domains, and their duals, can have size $2^{n-1}$, giving a drastic difference in size.  However, for the dual Arrow's single-peaked domains one could claim that this difference is illusory, since both domain types have at most two possible winners.

Next we will  consider domains for Arrow's IIA, following the  characterisation from \cite{barbie2006non}.

\begin{definition}
    A  voting rule satisfies Arrow's version of Independence of irrelevant alternatives  on a domain $D$, on a set of alternatives $A$, if  when $x$ is ranked before $y$ by the rule for a profile and the voters change the ranks of some subset of alternatives which does not include $x$ and $y$,  $x$ always remain ranked before $y$ when the rule is applied to the new profile.
\end{definition}

Barbie et al. define the class of \emph{hierarchically cyclic} domains, and here we give their defintion adapted to unitary domains.
\begin{definition}
    A unitary domain is \emph{hierarchically cyclic} if the  alternatives can be partitioned into intervals $I_1,I_2,\ldots, I_t$ such that the following conditions are satisfied by the orders in $D$:
    \begin{enumerate}
        \item If $x\in I_i$ and $y\in I_j$ then $x$ is ranked higher than $y$ when $i<j$.  
        \item The restriction of $D$ to any interval $I_i$ is either a set of  cyclic shifts, or has size at most 2. 
    \end{enumerate}
\end{definition}
Note that the restriction of $D$ to an interval $I$ of size $k$ has size at most $k$, since there are exactly $k$ cyclic shifts.

Using this definition they then show:
\begin{theorem}
    The Borda count satisfies Arrow's IIA on a non-dictatorial domain $D$ if and only if $D$ is hierarchically cyclic.
\end{theorem}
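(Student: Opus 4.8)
The plan is to prove both directions of the equivalence, working throughout with the Borda score and its behaviour under deletion of irrelevant alternatives. First I would set up the key observation that makes Arrow's IIA tractable for the Borda count: if a profile $\mathbf{R}$ is restricted to a subset $A'\subseteq A$ of alternatives, then for any two $x,y\in A'$ the \emph{difference} of their Borda scores changes in a controlled way — in fact the sign of the score difference between $x$ and $y$ in the restricted profile is governed by the pairwise comparisons and by how the deleted alternatives interleave with $x$ and $y$ across voters. Concretely, the Borda score difference between $x$ and $y$ equals the number of voters preferring $y$ to $x$ minus the number preferring $x$ to $y$ (up to sign conventions), which is already deletion-invariant for a single pair; the subtlety is that Arrow's IIA must hold for \emph{every} profile on the domain and \emph{every} subset of irrelevant alternatives, so I must control how triples behave. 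This is where Quasi-agreement-type conditions on triples re-enter, but now the relevant local condition is the cyclic/size-two structure rather than a fixed point.

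For the ``if'' direction, assume $D$ is hierarchically cyclic with interval partition $I_1,\dots,I_t$. I would argue that because any order in $D$ ranks all of $I_i$ above all of $I_j$ for $i<j$, the Borda comparison between $x\in I_i$ and $y\in I_j$ with $i<j$ is forced: $x$ beats $y$ in every profile and this is preserved under any deletion not touching $x,y$, since deleting alternatives only shifts both scores but cannot reverse a comparison that holds voter-by-voter. So the only interesting case is $x,y$ in the same block $I_i$. There, the restriction of $D$ to $I_i$ is a set of cyclic shifts (or has size $\le 2$, which is a degenerate sub-case handled directly). For cyclic shifts one computes that the Borda score of each alternative within the block is the same across the cyclic family in a way that makes the score difference between any two block-members depend only on their two relative ranks inside each voter's order — and crucially, deleting other alternatives (whether inside or outside the block) rescales uniformly. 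I would verify that the cyclic structure is exactly what guarantees that no deletion can flip a strict Borda preference inside a block; this is essentially the content of why Barbie et al. singled out cyclic shifts.

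For the ``only if'' direction, I would argue contrapositively: suppose $D$ is non-dictatorial but not hierarchically cyclic. Then either there is a triple $(i,j,k)$ on which the restriction of $D$ is neither a fixed-point-free cyclic family embeddable in a valid block nor of size $\le 2$ in a block-compatible way, or the ``higher block dominates'' interval structure fails. I would extract a small sub-configuration — on at most three or four alternatives — and build an explicit profile together with a choice of irrelevant alternatives to delete that reverses a strict Borda comparison, contradicting Arrow's IIA. The natural candidate obstruction is a triple whose restriction contains three orders forming the ``forbidden'' pattern (the full 3-cycle of the majority relation on the triple together with a transitive order, or any 4-order configuration on a triple), and one shows a two-voter or three-voter profile on which deleting the third alternative flips the Borda order of the other two. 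The block structure (condition 1) would be forced separately by observing that if some order in $D$ ranks $y$ above $x$ while another ranks $x$ above $y$ with $x,y$ not lying in a common cyclic block, one can again engineer a reversal.

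The main obstacle I expect is the ``only if'' direction: turning the failure of the hierarchically-cyclic structure into a \emph{single, concrete} profile-and-deletion pair that witnesses an Arrow-IIA violation for the Borda count. The combinatorics of which minimal forbidden patterns arise — and checking that each really does produce a Borda reversal under some admissible deletion — is fiddly and is where the real work lies; the ``if'' direction is comparatively mechanical once the block-domination and cyclic-invariance computations are in hand. Since this theorem is attributed to \cite{barbie2006non}, I would in fact cite their argument for the hard direction and only supply the verification that their ``hierarchically cyclic'' condition matches the definition as we have stated it for unitary domains, checking in particular that our interval/cyclic-shift formulation is equivalent to theirs.
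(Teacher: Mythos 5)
The paper does not prove this theorem at all: it is imported verbatim from \cite{barbie2006non} (``Using this definition they then show:''), so your closing move --- cite Barbie et al.\ for the hard direction and only check that the unitary reformulation matches theirs --- is exactly what the authors do, and is the appropriate resolution here.

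That said, if your sketch were to be read as a standalone argument it has two genuine problems worth flagging. First, you analyse Arrow's IIA as stability under \emph{deletion} of alternatives, but in this paper's taxonomy that operation defines \emph{Nash's} IIA; Arrow's version (as defined in Section 5.2) concerns voters \emph{re-ranking} the other alternatives while keeping $x$ and $y$ fixed, so the condition to verify is that the sign of the Borda margin between $x$ and $y$ depends only on the pairwise data $\{i : x \succ_i y\}$. Second, the claim that ``the Borda score difference between $x$ and $y$ equals the number of voters preferring $y$ to $x$ minus the number preferring $x$ to $y$'' is false for more than two alternatives: the per-voter rank gap $r_i(y)-r_i(x)$ can be any integer of the right sign, not $\pm 1$. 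The correct mechanism in the ``if'' direction is that in a hierarchically cyclic domain the gap $r_i(y)-r_i(x)$ takes exactly one value when voter $i$ prefers $x$ and exactly one (other) value when she prefers $y$ --- alternatives in other blocks sit uniformly above or below both, and the cyclic-shift structure forces the two-value property inside a block --- so the total Borda margin is an affine function of the majority margin, which is what Arrow's IIA requires. Your ``only if'' direction remains a plan rather than a proof, but since both you and the paper ultimately defer to \cite{barbie2006non}, that is acceptable.
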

As noted in \cite{barbie2006non} a hierarchically cyclic domain is minimally rich only when there is a single interval of length $n$ in the partition, leading to a domain of size at most $n$.

It follows from the definition that the maximum domain size for a given partition into intervals is $\prod_{i=1}^t k_i$, where $k_i=|I_i|$.  Finding the maximum possible domain size becomes a combinatorial problem: Which partition of an integer $n$ maximizes the product of the parts?  
\begin{theorem}
    The maximum  size of  a hierarchically cyclic domain is $3^{n/3}$, $4\times 3^{(n-4)/3}$, or $2\times 3^{(n-2)/3}$, for $n$ congruent to $0, 1$ and $2$ modulo 3 respectively. 

    A maximum domain has 3 winners for $n$ congruent to 0, 3 or 4 winners for $n$ congruent to 1 and 2 or 3 winners for $n$ congruent to 2 modulo 3.    
\end{theorem}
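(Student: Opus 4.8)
The plan is to reduce the statement to the classical number-theoretic fact that, among all partitions of $n$ into positive integer parts, the product of the parts is maximized by using parts equal to $3$ as much as possible, never using a part larger than $4$, and never using more than two $2$'s (replacing $2+2+2$ by $3+3$ and $4$ by $2+2$ being product-neutral, while $1$ is never used). By the preceding theorem, the Borda count satisfies Arrow's IIA exactly on hierarchically cyclic domains, and for a fixed partition into intervals $I_1,\ldots,I_t$ the largest such domain has size $\prod_{i=1}^t k_i$ with $k_i=|I_i|$. So maximizing domain size over all hierarchically cyclic domains is exactly maximizing $\prod k_i$ over all compositions (equivalently partitions, since the product does not depend on the order) of $n$.

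First I would dispose of small parts: a part $k_i=1$ can be merged with any other part $k_j$ to replace the factor $1\cdot k_j$ by $k_j$ without changing the product while reducing $t$; since $n\geq 3$ (the paper works with $n\geq 3$ throughout) we may assume no part equals $1$ unless $n$ itself forces it, and for $n\geq 2$ an optimal partition uses no $1$'s. Next, any part $k_i\geq 5$ can be split as $k_i=3+(k_i-3)$ with $3(k_i-3)\geq k_i$ whenever $k_i\geq 5$ (equality never holds here), so all parts in an optimal partition are in $\{2,3,4\}$. Finally, three parts equal to $2$ can be replaced by two parts equal to $3$ with the same product ($2^3=8<9=3^2$, so this strictly increases), hence an optimal partition has at most two $2$'s; and a part $4$ contributes the same as $2+2$. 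Combining these constraints: write $n=3a+r$ with $r\in\{0,1,2\}$. For $r=0$ the optimum is $3^{a}=3^{n/3}$. For $r=2$ the optimum is $2\cdot 3^{a}=2\cdot 3^{(n-2)/3}$. For $r=1$ we cannot append a lone $1$, so instead we borrow a $3$ and write the remaining $4$ either as a single part $4$ or as $2+2$; both give product $4\cdot 3^{a-1}=4\cdot 3^{(n-4)/3}$, which beats $2\cdot 2\cdot 3^{a-1}$ only in the sense of being equal, and beats any alternative. This gives the three stated maximum values.

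For the count of winners, I would use the structure of a maximum hierarchically cyclic domain together with the observation that the alternatives which can be ranked first are precisely those in the first interval $I_1$: by condition (1) every element of $I_1$ is ranked above every element of later intervals, and by condition (2) the restriction to $I_1$ is either a set of cyclic shifts of size $k_1$ (in which case every element of $I_1$ is ranked first in some order, giving $k_1$ winners) or has size at most $2$ (giving at most $2$, in fact exactly $2$ when we want maximum size and $k_1=2$). Thus the number of winners of a maximum domain equals $k_1$, the size of the first interval in the maximizing partition, where we are free to choose which optimal partition to realize. For $n\equiv 0$ the only optimal partition type (up to the neutral $4\leftrightarrow 2+2$ swaps and ordering) has all parts $3$, so $k_1=3$ and there are $3$ winners. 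For $n\equiv 1$ the optimal partitions are $3+\cdots+3+4$ and $3+\cdots+3+2+2$; placing the $4$ first gives $4$ winners, placing a $3$ first gives $3$ winners, so both $3$ and $4$ winners are achievable among maximum domains. For $n\equiv 2$ the optimal partition is $3+\cdots+3+2$; placing the $2$ first gives $2$ winners, placing a $3$ first gives $3$ winners, so both $2$ and $3$ are achievable. This matches the statement.

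The main obstacle is not any single computation but making the reduction airtight: one must verify that every product-maximizing partition is, up to reordering and the product-preserving substitution $4\leftrightarrow 2+2$ (and the strict improvements $1\to$ merge and $k\geq 5\to 3+(k-3)$ and $2+2+2\to 3+3$), of the claimed canonical form, so that the enumeration of achievable first-interval sizes is exhaustive. I would handle this by a clean induction on $n$: the base cases $n=3,4,5$ are checked by hand, and for $n\geq 6$ any optimal partition must contain a part $\geq 3$ (else it is all $2$'s, handled directly) or we apply the $2+2+2\to3+3$ move; removing a $3$ from an optimal partition of $n$ leaves an optimal partition of $n-3$, to which the inductive hypothesis applies. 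The only mild subtlety is the $n\equiv 1\pmod 3$ case, where one must not accidentally allow a $1$ and must check that $4\cdot 3^{a-1}$ genuinely exceeds every competitor (in particular $2\cdot 2\cdot 3^{a-1}$ ties it, which is why the $4$-winner description and the $3$-winner description both survive). Once the canonical forms are pinned down, reading off the product and the set of possible values of $k_1$ is immediate.
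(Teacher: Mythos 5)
Your determination of the maximum size is correct and follows essentially the same route as the paper: both reduce to maximizing $\prod_i k_i$ over partitions of $n$. The paper argues via maximizing $\sum_k n_k\log k$ and the observation that $\tfrac{1}{k}\log k$ peaks at $k=3$; your local-exchange argument (merge $1$'s, split $k\geq 5$ as $3+(k-3)$, replace $2+2+2$ by $3+3$, note $4\leftrightarrow 2+2$ is product-neutral) is the more standard and, frankly, tighter justification of the same canonical form. Your identification of the winners with the first interval $I_1$, and of their number with $k_1$ when that interval carries all its cyclic shifts, also matches the paper.

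However, in the winners count your own analysis exposes a gap that you then fail to act on. For $n\equiv 1\pmod 3$ you correctly list \emph{both} optimal partitions, $3+\cdots+3+4$ and $3+\cdots+3+2+2$, but you only consider placing the $4$ or a $3$ first. By your own principle that the number of winners equals $k_1$ and that any optimal partition may be realized, placing one of the two $2$'s first yields a maximum-size domain with only $2$ winners, so the conclusion ``$3$ or $4$ winners'' is not exhaustive. (The paper's proof silently drops the $2+2$ partition and so does not confront this.) There is also a boundary problem at $n=4$: the only optimal partitions are $\{4\}$ and $\{2,2\}$, so $3$ winners is not achievable there at all, while $2$ is. To repair the argument you must either rule out the $2+2$ realization (there is no ground for doing so under the stated definition of hierarchically cyclic domains, since a $2$-interval with both orders is a full set of cyclic shifts) or amend the list of achievable winner counts to $\{2,3,4\}$ for $n\equiv 1\pmod 3$, $n\geq 7$. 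Aside from this, your claim that merging a $1$ into another part leaves the product ``unchanged'' should read ``strictly increases it'' ($1\cdot k_j=k_j<k_j+1$); the conclusion that optimal partitions contain no $1$'s is unaffected.
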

\begin{proof}
    The optimal partition of $n$ will consist of multiple parts of size 3, and a single part of size 4 or 2 when $n$ is congruent to 1 and 2 modulo 3 respectively.

    To see this we note that the combinatorial problem i equivalent to maximizing $\sum_i \log{k_i}=\sum_k n_k \log{k}$, where $n_k$ is the number of parts of size $k$.  The count $n_k$ is bounded from above by $n/k$  and the summand is bounded by $n/k\log{k}$. Here we find that  for integer $1/k \log{k}$ is maximized at $k=3$, and is -1  for $k=1$. Hence the sum is maximized when the number of parts of size 3 is maximized, under the condition that no part of size 1 is used.  
    
    For $n$ congruent to 0 modulo 3 we get a partition using $n/3$  parts of size 3, giving us 3 possible winners.

    For $n$ congruent to 1 modulo 3 we get a partition using $(n-4)/3$  parts of size 3, and one part of size 4,  giving us 3 or 4 possible winners depending on whether the part of size 4 is used as the first interval or not.

    For $n$ congruent to 2 modulo 3 we get a partition using $(n-2)/3$  parts of size 3 and one part of size 2, giving us 2 or 3 possible winners, depending on whether the part of size 2 is used as the first interval or not.
\end{proof}

Here we find that the domains on which the Borda count satisfies Arrow's version of IIA are larger than  for the Nash version.  The growth rate here is $3^{1/3}\approx 1.44$ compared to $2^{1/2}\approx 1.41$ for Nash.  However, the size is still much smaller than for Arrow's single-peaked domains, and the number of possible winners is again bounded by a constant for the maximum domain. 

We can of course prescribe the number of possible winners in a hierarchically cyclic domain by taking the first interval to have length $p$ and then optimizing the size with respect to the other intervals. Ignoring modulo effects this gives a domain of size $q 3^{(n-q)/3}$. Up to $n=12$ this makes it possible to have $q\geq n/2$ while having a domain larger than the domains for Nash's IIA. However, as $n$ grows  the maximum $q$ which gives a larger domain for Arrow's IIA shrinks down to $q=cn$, where $c=1-\frac{\log(8)}{\log(9)}\approx 0.0536$

\section{Discussion}
In this paper we have first verified and extended the existing enumerations of non-isomorphic Arrow's single-peaked domains.  Following up this we showed that despite being similar  in terms of many properties  the different Arrow's single-peaked domains vary considerably in terms of richness for the high ranked alternatives, with Black's single-peaked domain providing an example of maximum richness for small $n$.   Our results show that in addition to Black's single-peaked domain there are other Arrow's single-peaked domains which provide the same amount of richness, in terms of high ranked alternatives.

After this computational study we connected Arrow's single-peaked domains to the study of various voting rules for choosing a single winner. In particular we showed that the domains on which plurality and runoff voting  satisfy Nash's version of independence of irrelevant alternatives are exactly the duals of the Arrow's single-peaked domains. A consequence of this is that these voting systems fail to satisfy Nash's IIA  in any domain with more than two first ranked alternatives.

Finally we extended this comparison to include Borda's rule. This voting rule is well-behaved on different domains depending on whether Arrow's or Nash's version of IIA is applied, but in each case the domains are small compared to the Arrow's single-peaked domains, and have a limited number of possible winners.

Our results on domains in which the different versions of IIA are satisfied show that these domains are small, compared to Arrow's single-peaked domains, and Condorcet domains more generally.  This could be seen as either a weakness of the voting rules concerned or a too strong constraint from the axioms applied.  These axioms have been debated for a long time, see e.g. \cite{patty19} for a recent survey and defense of Arrow's IIA, and independently of the appealing consequences of the axiom(s) it is known that very few social choice rules satisfy it unless the domain is restricted. Our results demonstrate just how tight those domain restrictions have to be.  

On the other hand,  Maskin \cite{maskin2023borda} has recently proposed a new version of IIA which is satisfied by Borda's rule without the need for any domain restriction, though in a setting with a continuum of voters. So, after more than 70 years of study this is still an area with new possibilities to consider.

\section*{Acknowledgement}
We would like to thank Alexander Karpov and Arkadii Slinko for their comments on the manuscript of this paper. Bei Zhou was funded by the China Scholarship Council (CSC).  This research utilised Queen Mary's Apocrita HPC facility, supported by QMUL Research-IT. This research was conducted using the resources of High Performance Computing Center North (HPC2N).


\ifx\undefined\bysame
\newcommand{\bysame}{\hskip.3em \leavevmode\rule[.5ex]{3em}{.3pt}\hskip0.5em}
\fi
\bibliographystyle{econ}

\end{document}